\title{FPT Approximation using Treewidth: Capacitated Vertex Cover, Target Set Selection and Vector Dominating Set}
\titlerunning{FPT Approximation using Treewidth: CVC, TSS and VDS} 
\author{Huairui Chu}{Nanjing University}{huairuichu@163.com}{}{}
\author{Bingkai Lin}{Nanjing University}{lin@nju.edu.cn}{}{}
\authorrunning{H.Chu and B.Lin} 
\keywords{FPT approximation algorithm, Treewidth, Capacitated vertex cover, Target set selection, Vector dominating set} 
\newcommand{\rcd}{R_{\alpha}} 
\newcommand{\rcds}[1]{R_{\alpha_{#1}}} 
\newcommand{\rcdap}{\hat{R}_{\alpha}} 
\newcommand{\rcdaps}[1]{\hat{R}_{\alpha_{#1}}} 
\newcommand{\tod}{d} 
\begin{document}

\maketitle

\begin{abstract}
Treewidth is a useful tool in designing graph algorithms. Although many NP-hard graph problems can be solved in linear time  when the input graphs have small treewidth, there are problems which remain hard on graphs of bounded treewidth. In this paper, we consider three vertex selection problems that are W[1]-hard when parameterized by the treewidth of the input graph, namely the capacitated vertex cover problem, the target set selection problem and the vector dominating set problem.  We provide two new methods to obtain FPT approximation algorithms for  these problems. For the capacitated vertex cover problem and the vector dominating set problem, we obtain $(1+o(1))$-approximation FPT algorithms. 
For the target set selection problem, we give an FPT algorithm providing a tradeoff between its running time and the approximation ratio.


\end{abstract}

\section{Introduction}
\label{Introduction}
We consider  problems whose goals are to select a minimum sized vertex set in the input graph that can ``cover'' all the target objects. 
In the capacitated vertex cover problem (CVC), we are given a graph $G$ with a capacity function $c : V(G)\to\mathbb{N}$, the goal is to find a  set $S\subseteq V(G)$ of minimum size such that every edge of $G$ is covered\footnote{An edge $e$ can be covered by a vertex $v$ if $v$ is an endpoint of $e$.} by some vertex in $S$ and each vertex $v\in S$ covers at most $c(v)$ edges. This problem has application in planning experiments on redesign of known drugs involving glycoproteins~\cite{guha2002capacitated}. In the target set selection problem (TSS), we are given a graph $G$ with a threshold function $t:V(G)\to\mathbb{N}$. The goal is to select a minimum sized set $S\subseteq V(G)$ of vertices that can activate all the vertices of $G$. The activation process is defined as follows. Initially, all vertices in the selected set $S$ are activated. In each round, a vertex $v$ gets active if there are $t(v)$ activated vertices in its neighbors. The  study of  TSS has application in maximizing influence in social network~\cite{kempe2003maximizing}. Vector dominating set (VDS) can be regarded as a ``one-round-spread'' version of TSS, where the input consists of a graph $G$ and a threshold function $t:V(G)\to\mathbb{N}$, and the goal is to find a set $S\subseteq V(G)$ such that for all vertices $v\in V$, there are at least $t(v)$ neighbors of $v$ in $S$.

Since CVC generalizes the vertex cover problem, while TSS and VDS  are no easier than the dominating set problem\footnote{For VDS, when $t(v)=1$ for every vertex $v$ in the graph, VDS is the dominating set problem. For TSS, a reduction from dominating set to TSS can be found in the work of Charikar et al.~\cite{CharikarNW16}.}, 
they are both NP-hard and thus have no polynomial time algorithm unless $P=NP$.  
Polynomial time approximation algorithms for capacitated vertex cover problem have been studied extensively~\cite{guha2002capacitated,chuzhoy2006covering,gandhi2006improved,saha2012set,cheung2014improved,wong2017tight,shiau2017tight}.  
The problem has a $2$-approximation polynomial time algorithm~\cite{gandhi2006improved}. Assuming the \emph{Unique Game Conjecture}, there is no polynomial time  algorithm for the vertex cover problem with  approximation ratio better than $2$~\cite{khot2008vertex}. As for the TSS problem, 
 it is known that the minimization version of TSS cannot be approximated to $2^{\log^{1-\epsilon}n}$ assuming $P\neq NP$, or $n^{0.5-\epsilon}$ assuming the conjecture on planted dense subgraph~\cite{chen2009approximability,charikar2016approximating}.
Cicalese et al. proved that VDS cannot be approximated within a factor of $c\ln n$ for some $c$ unless $P= NP$ \cite{vds13}.

Another way of dealing with hard computational problems is to use parameterized algorithms. 
For any input instance $x$ with parameter $k$, an algorithm with running time upper bounded by $f(k)\cdot |x|^{O(1)}$ for some computable function $f$ is called FPT. 
A natural parameter for a computational problem is the solution size. 
The first FPT algorithm with running time $1.2^{k^2}+n^2$ for capacitated vertex cover problem parameterized by solution size was provided in \cite{guo2005parameterized}. In~\cite{cdscvc08}, the authors  gave an improved FPT algorithm that runs in $k^{3k}\cdot |G|^{O(1)}$.  
However, using the solution size as parameter might be too strict for CVC. Note that CVC instances with sublinear capacity functions cannot have small sized solutions.
On the other hand, TSS parameterized by its solution size is W[P]-hard
\footnote{The well known W-hierarchy is $FPT\subseteq W[1]\subseteq W[2]\subseteq ...\subseteq W[P]$, where $FPT$ denotes the set of problems which admits FPT algorithms. The basic conjecture on parameterized complexity is $FPT\neq W[1]$. We refer the readers to~\cite{downey2013fundamentals,flugro06,cygan2015parameterized} for more details.}
according to \cite{tsswp}. VDS is W[2]-hard since it generalizes the dominating set problem.

In this paper, we consider these problems   parameterized by the treewidth~\cite{robertson1986graph} of the input graph.  In fact, since the treewidth of a graph having $k$-sized vertex cover is also upper-bounded by $k$~\cite{cdscvc08}, CVC parameterized by treewidth can be regarded as a natural generalization of CVC parameterized by solution size. And it is already proved in  \cite{cdscvc08}  that CVC parameterized only by the treewidth of its input graph has no FPT algorithm assuming $W[1]\neq FPT$. 
As for the TSS problem, it can be solved in $n^{O(w)}$ time for graphs with $n$ vertices and treewidth bounded by $w$ and has no $n^{o(\sqrt{w})}$-time algorithm unless ETH fails~\cite{tssw1}.
VDS is also $W[1]$-hard when parameterized by treewidth \cite{bddw1fpt}, however, it admits an FPT algorithm with respect to the combined parameter $(w+k)$\cite{vdsw1fpt}. 







%



Recently, the approach  of combining parameterized algorithms and approximation algorithms has received increased attention~\cite{feldmann2020survey}. It is natural to ask whether there exist FPT algorithms for these problems with approximation ratios better than that of the polynomial time algorithms. Lampis~\cite{lampis} proposed a general framework for approximation algorithms on tree decomposition. 
Using his framework, one can obtain algorithms for CVC and VDS which outputs a solution of size at most $opt(I)$ on input instance $I$ but may slightly violate the capacity or the threshold requirement within a factor of $(1\pm\epsilon)$. 
However,  the framework of Lampis can not  be directly used to find an approximation solution for these problems satisfying all the capacity or threshold requirement. The situation becomes worse in the TSS problem, as  the error might propagate during the activation process. We overcome these difficulties and give positive answer to the aforementioned question. For the CVC and VDS problems, we obtain $(1+o(1))$-approximation FPT algorithms respectively.

\begin{theorem}\label{main}
    There exists an algorithm \footnote{The algorithm can be modified to output a solution with size as promised. See the remark in Appendix~\ref{mainproof1}.}, which takes
    a CVC instance $I=(G,c)$ and a tree decomposition $(T,\mathcal{X})$ with width $w$ for $G$ as input and outputs an integer $\hat{k}_{\min}\in [opt(I),(1+O(1/(w^2\log n))) opt(I)]$ in  $(w\log n)^{O(w)}n^{O(1)}$ time.
\end{theorem}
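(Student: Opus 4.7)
The plan is a dynamic program on a nice, depth-balanced tree decomposition derived from $(T,\mathcal{X})$, with a bucketed state for the per-vertex used capacity. At each bag $X_t$ the state is a pair $(S,\mathbf{a})$, where $S\subseteq X_t$ is the set of cover vertices visible in the bag and $\mathbf{a}$ records, for every $v\in S$, the index of the count $u_v$ of edges already assigned to $v$ in the processed subgraph within a geometric bucket partition of multiplicative width $1+\epsilon$. The DP value stored at each state is the minimum number of cover vertices in the processed subtree consistent with $(S,\mathbf{a})$. Tracking $u_v$ exactly gives $n^{O(w)}$ states; the bucketing compresses this to $(w\log n)^{O(w)}$.

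First I would make $(T,\mathcal{X})$ nice and balance it to depth $O(\log n)$ using the standard Bodlaender--Hagerup transformation, so every vertex of $G$ lies in bags along an ancestor chain of length $O(\log n)$. Setting $\epsilon = \Theta(1/(w^{O(1)}\log^{O(1)} n))$ yields $O(\log_{1+\epsilon} n)=(w\log n)^{O(1)}$ buckets per vertex and hence $2^w(w\log n)^{O(w)}=(w\log n)^{O(w)}$ states per bag. At every introduce, forget, edge-assignment and join transition, the updated $u_v$ is re-rounded to the next bucket boundary. Rounding \emph{up} enforces the invariant that every state reporting $u_v\le c(v)$ for all $v\in S$ corresponds to a genuinely feasible partial cover. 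Binary searching over candidate cover sizes $k$ for the smallest $k$ admitting a feasible state then gives $\hat{k}_{\min}\ge opt(I)$, and the running time works out to $(w\log n)^{O(w)}n^{O(1)}$.

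For the upper bound $\hat{k}_{\min}\le (1+O(1/(w^2\log n)))\cdot opt(I)$, I would take an optimum cover $S^*$ and show it can be augmented into a cover accepted by the DP at size only slightly larger. The up-rounding along the ancestor chain of any vertex $v$ inflates the tracked value by at most a factor $(1+\epsilon)^{O(\log n)}=1+O(\epsilon\log n)$, so the DP could perceive at most $O(\epsilon\log n)\cdot u_v$ spurious excess edges at each $v\in S^*$. I would absorb these overflows by adding, for each overflowing edge $e=\{u,v\}$, the other endpoint of $e$ to the cover and reassigning $e$ to that new patch vertex. Charging the patch against edges actually incident to $S^*$ rather than against the raw capacities, and tuning $\epsilon$ as a sufficiently small polynomial in $1/(w\log n)$, gives a total patch size of $O(opt(I)/(w^2\log n))$, matching the stated ratio.

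The main obstacle is this patch analysis. A naive bound that charges overflow $O(\epsilon\log n)\cdot c(v)$ to each $v\in S^*$ is too lossy, since $\sum_{v\in S^*} c(v)$ can be arbitrarily larger than the edge set actually assigned to $S^*$; one must instead charge against $u_v$, aggregate by edges, and carefully interleave the rounding direction with a ``global overflow'' counter carried as part of the DP state. This edge-based global accounting, rather than the independent per-vertex slack that Lampis's framework produces, is what I expect separates this result from the standard capacity-violating guarantee mentioned in the introduction, and it is the step I anticipate spending the most care on to land strictly inside the $(1+O(1/(w^2\log n)))$ budget.
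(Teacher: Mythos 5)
Your proposal follows the Lampis-style rounding framework, but the quantity you choose to bucket --- the used capacity $u_v$, i.e.\ the number of edges already assigned to $v$ --- is exactly the choice that makes the error analysis unsalvageable, and your own closing paragraph concedes that the step which would fix it is still open. The patch you describe adds one new cover vertex per overflow edge, and the total overflow is $O(\epsilon\log n)\sum_{v\in S^*}u_v=O(\epsilon\log n)\,|E|$. Since $|E|/opt(I)$ can be as large as $\max_v c(v)$ (a star with one high-capacity center is already a witness), no $\epsilon$ simultaneously keeps the patch at $O(opt(I)/(w^2\log n))$ and the bucket count at $(w\log n)^{O(1)}$; driving $\epsilon$ down to $opt/(|E|\cdot\mathrm{poly})$ collapses the state space back to $n^{O(w)}$. "Charging against edges actually incident to $S^*$" is the same sum $\sum u_v=|E|$, so it does not help, and the "global overflow counter interleaved with rounding directions" is named but never constructed. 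This is a genuine gap, not a technicality: it is the central obstacle of the problem.

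The paper resolves it by bucketing a different quantity: for each bag vertex $v$ it records the \emph{out-degree} $\tod(v)$, the number of edges at $v$ covered by \emph{other} vertices of the processed part, rather than the capacity $v$ has consumed. The point (Lemma~\ref{lb}) is that $\tod(v)\le k$ for every record, because each such edge sinks at a distinct already-paying solution vertex; hence a multiplicative $(1+O(\epsilon h))$ error on $\tod(v)$ is an additive error of at most $O(\epsilon h)\,k$ and can be absorbed into a multiplicative error on $k$ itself --- precisely the charge to $opt$ that your in-degree accounting cannot make. The comparison with $c(v)$ at forget nodes is then not done by rounding at all: feasibility of a relaxed record is tested exactly by a polynomial-time max-flow subroutine (Lemma~\ref{pl}), and an exchange argument on orientations (Lemma~\ref{extent}) converts that relaxed feasibility into a genuine record whose $k$ grows by only the controlled amount. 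If you want to rescue your plan, the fix is to switch the tracked statistic from $u_v$ to its complement among the edges leaving the bag, and to replace the rounded capacity check by an exact flow-based feasibility test; without both changes the approximation guarantee does not go through.
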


\begin{theorem}\label{thm:vdsapp}
There exists an algorithm running in time $2^{O(w^5\log w \log\log n)}n^{O(1)}$ which takes as input an instance $I=(G,t)$ of VDS and a tree decomposition of $G$ with width $w$, finds a solution of size at most $(1+1/(w\log\log n)^{\Omega(1)})\cdot opt(I)$.
\end{theorem}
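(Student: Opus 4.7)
The plan is to adapt the tree-decomposition rounding framework of Lampis~\cite{lampis}, which by itself only yields a set of size $opt(I)$ that $(1\pm\epsilon)$-violates the thresholds, into a dynamic program that produces a genuine VDS while inflating the solution size by only $(1+\epsilon)$ with $\epsilon=1/(w\log\log n)^{\Omega(1)}$. The DP carries, for each vertex of the current bag, a coarsely rounded count of how many of its already-processed neighbors lie in the tentative solution. The novelty is to couple the rounding with an inflation of the threshold vector, so that feasibility in the rounded DP implies feasibility in the original instance and not merely in a relaxed one.

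I would start from the standard nice-tree-decomposition DP for VDS: a state at bag $B_x$ is a pair $(S_x,\mathbf{c}_x)$ with $S_x\subseteq B_x$ and $c_v\in\{0,\ldots,t(v)\}$ for every $v\in B_x$, recording how many already-forgotten neighbors of $v$ lie in the tentative solution; forget nodes enforce $c_u\ge t(u)$ on the dropped vertex. Then I would pick, for each $v$, a set $R_v\subseteq\{0,\ldots,t(v)\}$ of $r$ geometrically-spaced representative values and restrict stored counts to $R_v$. After every DP transition the updated counts are re-rounded \emph{downwards} to the nearest element of $R_v$. Downward rounding is crucial: a coarsened state declared feasible corresponds to a real partial solution whose true counts are at least the stored values, so the forget check ``$c_u\ge\tau$'' under rounding implies the true inequality at $u$.

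To make the DP actually find a VDS for the original $t$, I would run it on an inflated instance $\tilde I$ with targets $\tilde t(v)=\lceil(1+\epsilon)t(v)\rceil$, so that feasibility for $\tilde I$ automatically entails feasibility for $I$. The main obstacle is the matching upper bound -- showing that the coarsened DP can certify a set of size $(1+\epsilon)\cdot opt(I)$ for $\tilde I$. I would prove this by an \emph{augmentation lemma}: starting from an optimum $S^*$ for $I$, construct a superset of size $(1+\epsilon)|S^*|$ together with a chain of rounded states through $T$ such that, at every forget node, the downward-rounded count already reaches $\tilde t(u)$. Making this work requires (i) using a balanced tree decomposition so that the cumulative rounding loss affecting any fixed vertex is bounded by a factor commensurate with the density of $R_v$, and (ii) greedily placing the added vertices in $N(u)$ at the forget nodes, with a charging argument against the $O(w)$ vertices that each join can mix, to keep the total overhead within $\epsilon\cdot |S^*|$. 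Controlling how the rounding losses compound across join nodes -- where two subtrees' counts are summed -- is the most delicate step.

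The running time is then a matter of counting: each bag hosts at most $2^{w+1}\cdot r^{w+1}$ states, joins combine two tables in time $r^{O(w)}\cdot n^{O(1)}$, and there are $O(n)$ nodes in $T$; with $r$ chosen to support the density required by the augmentation lemma -- effectively $\log r=O(w^4\log w\log\log n)$ -- the total running time is $2^{O(w^5\log w\log\log n)}n^{O(1)}$, matching Theorem~\ref{thm:vdsapp}.
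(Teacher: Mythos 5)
Your plan diverges completely from the paper's, and it rests on a step that you flag as the ``main obstacle'' but do not actually close. The crux is your augmentation lemma: running the rounded DP on the globally inflated instance $\tilde I$ with $\tilde t(v)=\lceil(1+\epsilon)t(v)\rceil$ only helps if $opt(\tilde I)\le(1+\epsilon)\cdot opt(I)$, and nothing in the proposal establishes this. Inflation is not a local perturbation for VDS: any vertex $u$ with $|N(u)\cap S^*|=t(u)$ exactly must, in $\tilde I$, either acquire roughly $\epsilon\, t(u)$ \emph{additional} solution-neighbors or be placed into the solution itself, and these repairs are private to $u$ --- an added neighbor of $u$ does nothing for a different tight vertex $u'$ unless they share neighbors, and the treewidth bound limits how much sharing is possible. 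Summing these private repair costs over all tight vertices is exactly what has to be charged against $\epsilon|S^*|$, and you give no charging argument; the ``greedy placement at forget nodes'' sketch does not address why the total is not $\Theta(\epsilon\sum_u t(u))$, which can vastly exceed $\epsilon|S^*|$. The paper itself warns that Lampis' framework cannot be used directly to satisfy all threshold constraints, and for CVC it needs two problem-specific structural lemmas (a degree lower bound on the solution size and a flow-augmentation repair lemma) to convert a slightly-violating table entry into a genuine one at bounded cost; you would need a VDS analogue of that machinery, and it is the missing heart of your argument, not a routine detail. A further symptom that the route is off: your running time is obtained by reverse-engineering $\log r=O(w^4\log w\log\log n)$ representatives, a density for which no need is derived --- the natural choice in a Lampis-style DP is $r=\mathrm{poly}(w\log n)$, which would give a different (in fact better) time bound than the theorem states.

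For comparison, the paper proves Theorem~\ref{thm:vdsapp} with no rounding and no approximate DP at all. It shows (Theorem~\ref{main2}) that any \emph{monotone} and \emph{splittable} vertex-subset problem admits a $(1+(w+1)/(l+1))$-approximation given an exact algorithm for partial instances with solution size at most $l$: repeatedly find a lowest node $\alpha$ of the tree decomposition that is not $l$-good, solve each child's subinstance optimally (every child is $l$-good) using the exact FPT algorithm of Raman et al.\ with parameter $k+w$, add the child bags $X_{\alpha_c}$ --- the only non-optimal part, of size at most $w+1$ --- and charge them against the at least $l+1$ optimal vertices that must lie in $Y_\alpha$ because $\alpha$ is not $l$-good. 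Setting $l=w^2(\log\log n/\log\log\log n)^{0.5}$ yields both the ratio and the $2^{O(w^5\log w\log\log n)}$ running time (which comes from $k^{O(wk^2)}$ with $k=l$, not from a DP table). If you want to pursue your DP route, you must first prove the VDS analogue of the paper's Lemmas~\ref{lb} and~\ref{extent}, or a correct global version of the augmentation lemma; as written, the proof has a genuine gap.
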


Notice that the running time stated in above theorems are FPT running time, because $(\log n)^{f(w)}\leq f(w)^{O(f(w))}+n^{O(1)}$.

For the TSS problem, we give an  approximation algorithm with a tradeoff between the approximation ratio and its running time.
\begin{theorem}\label{thm:tssapp}
For all $C\in \mathbb{N}$, there is an algorithm which takes as input an instance $I=(G,t)$ of TSS and a tree decomposition of $G$ with width $w$, finds a solution of size $(1+(w+1)/(C+1))\cdot opt(I)$ in  time $n^{C+O(1)}$.
\end{theorem}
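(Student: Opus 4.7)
The strategy is to collapse the optimal activation process, which may span up to $n$ rounds, into a coarser process with only $C{+}1$ phases, and to find the best such phased solution via a dynamic program on the tree decomposition. Call a \emph{$(C{+}1)$-phased solution} a pair $(S,\ell)$ with $\ell\colon V(G)\to\{0,\ldots,C{+}1\}$, $\ell(v)=0$ iff $v\in S$, and such that for every $v$ with $\ell(v)\ge 1$ at least $t(v)$ neighbors $u$ satisfy $\ell(u)<\ell(v)$; any such solution trivially corresponds to a valid TSS solution of seed size $|S|$ by activating vertices in order of $\ell$.

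\textbf{Algorithm and approximation.} For each choice of the $C$ level-boundaries in the global activation schedule (at most $n^{O(C)}$ options), I would run a polynomial-time subroutine on the tree decomposition that, phase by phase, computes the smallest additional seed needed so that the frontier of activations matches the guessed schedule: a bag-based verification enforces the threshold constraint for each vertex, and because the level is now compressed into $\{0,\ldots,C{+}1\}$, the per-bag state is polynomial in $n$, giving total running time $n^{C+O(1)}$. For the approximation guarantee, fix an optimum $S^*$ whose activation order lasts $L$ rounds. For each cyclic offset $s$, partition $\{1,\ldots,L\}$ into $C{+}1$ blocks of length $\lceil L/(C{+}1)\rceil$ shifted by $s$, and define $\ell_s(v)$ to be the block index containing $v$'s activation round (with $\ell_s(v)=0$ for $v\in S^*$). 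Call a non-seed $v$ \emph{bad for $s$} if one of its $t(v)$ chosen witnesses $u$ has $\ell_s(u)=\ell_s(v)$; adding every bad vertex to the seed converts $\ell_s$ into a feasible $(C{+}1)$-phased solution. For any fixed witness edge $uv$, the fraction of offsets on which $u$ and $v$ collide is at most $1/(C{+}1)$; I then leverage the tree decomposition of width $w$ and charge each bad vertex to a common bag it shares with a colliding witness (each bag has $\le w{+}1$ vertices, so at most $w{+}1$ simultaneous phase collisions can be attributed to one bag) to exhibit an offset producing at most $\tfrac{w+1}{C+1}|S^*|$ bad vertices, yielding a phased solution of size $\le \bigl(1+\tfrac{w+1}{C+1}\bigr)\,opt(I)$.

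\textbf{Main obstacle.} The delicate step is converting the per-edge $1/(C{+}1)$ collision probability into a bound proportional to $|S^*|$ rather than $|V|$ or $L$. I anticipate using the structure of the activation DAG induced by $S^*$---each non-seed has a bounded number of witness edges, every such witness edge lies in some common bag, and each bag holds at most $w{+}1$ vertices---combined with averaging over cyclic offsets to obtain the desired bound. A secondary technical concern is that the phase-labeled DP must genuinely run in $n^{C+O(1)}$ time: one must prune neighbor-counters that exceed $t(v)$ and restrict attention to the $O(C)$ informative levels to prevent an exponential blowup in $w$. Edge cases such as $L<C{+}1$ (fewer rounds than phases) and vertices with $t(v)=0$ fall out cleanly after the phased reformulation.
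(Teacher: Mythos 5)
Your reduction to a $(C{+}1)$-phased solution contains a fatal flaw: the claim that for a fixed witness edge $uv$ the fraction of cyclic offsets on which $u$ and $v$ land in the same block is at most $1/(C{+}1)$ is backwards. If $u$ is activated at round $r_u$ and $v$ at round $r_v=r_u+\delta$ with block length $B=\lceil L/(C{+}1)\rceil$, the two rounds fall in the same block for $\max(0,B-\delta)$ of the $B$ offsets, i.e.\ with frequency $1-\delta/B$. For $\delta=1$ (a witness activated in the immediately preceding round, which is the typical case in a long activation chain) this is nearly $1$, not $1/(C{+}1)$; it is the event ``a boundary separates $u$ from $v$'' that has frequency at most $\delta/B$. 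A concrete counterexample is a path $v_1,\dots,v_n$ with all thresholds $1$: here $w=1$, $opt=1$, $L=n-1$, and every non-seed vertex except the first one in each block is bad for every offset, so the number of bad vertices is about $n-(C{+}1)$. No averaging over offsets, and no charging to bags (there are $\Theta(n)$ bags, with no relation to $|S^*|$), can bound the bad set by $\tfrac{w+1}{C+1}|S^*|$. The obstacle you flag as ``delicate'' is in fact insurmountable for this decomposition: collapsing the activation order into $O(C)$ strict levels destroys long serial propagation chains, which is exactly the feature that makes TSS hard on bounded-treewidth graphs. (A secondary issue: even granting the phased reformulation, the bag states of your DP must track, for each bag vertex, how many lower-level neighbors it has accumulated, a counter up to $t(v)\le n$, so the natural DP runs in $n^{O(w)}$ rather than $n^{C+O(1)}$.)

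For contrast, the paper does not discretize time at all. It exploits that TSS is monotone and splittable across separators: it finds a lowest node $\alpha$ of the tree decomposition for which the subproblem induced on $Y_\alpha$ has no solution of size $\le C$ (so the optimum places at least $C{+}1$ vertices there), solves each child's subproblem exactly by brute force in $n^{C+O(1)}$ time, adds the child bags (at most $w{+}1$ extra vertices) to glue these partial solutions, deletes the processed subtrees, and recurses. The $(w{+}1)/(C{+}1)$ loss comes from amortizing the bag against the $\ge C{+}1$ optimal vertices certified to lie below $\alpha$, which is the counting step your argument is missing.
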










\noindent\textbf{Open problems and future work.} Note that our FPT approximation algorithm for TSS 
has ratio equal to the treewidth of the input graph. An immediate question is whether this problem has parameterized $(1+o(1))$-approximation algorithm. We remark that the reduction from $k$-Clique to TSS in~\cite{tssw1} does not preserve the gap. Thus it does not rule out constant FPT approximation algorithm for TSS on bounded treewidth  graphs even under hypotheses such as \textit{parameterized inapproximablity hypothesis} ({\sf PIH}) \cite{LRSZ20} or {\sf GAP-ETH}~\cite{Din16,MR16}.

In the regime of exact algorithms, we have the famous Courcelle’s Theorem which states that all problems defined in \emph{monadic second order logic} have linear time algorithm on graphs of bounded treewidth~\cite{arnborg1991easy,courcelle1990graph}. It is interesting to ask if  one can obtain a similar algorithmic meta-theorem~\cite{Grohe08} for approximation algorithms.

\subsection{Overview of our techiniques}
\noindent\textbf{Capacitated Vertex Cover.} Our starting point is the exact algorithm for CVC on graphs with treewidth $w$ in $n^{\Theta(w)}$ time. The exact algorithm has running time $n^{\Theta(w)}$ because it has to maintain a set of $(w+1)$-dimension vectors $\tod:X_\alpha\to[n]$ for every node $\alpha $ in the tree decomposition.  One can get more insight by checking out the exact algorithm for CVC in Section \ref{Exact Algorithm}.  To  reduce the size of such a table, Lampis' approach~\cite{lampis} is to pick a parameter $\epsilon\in(0,1)$ and round every integer to the closest integer power of $(1+\epsilon)$. In other words,  an integer $n$ is represented by $(1+\epsilon)^x$  with $(1+\epsilon)^x\le n<(1+\epsilon)^{x+1}$. Thus it suffices to keep $(\log n)^{O(w)}$ records for every bag in the tree decomposition. The price of this approach is that we can only have  approximate values for records in the table. Note that the errors of approximate values might accumulate after addition (See Lemma~\ref{error}). Nevertheless, we can choose a tree decomposition with height $O(w^2\log n)$ and set $\epsilon=1/poly(w\log n)$ so that if the dynamic programming procedure only involves adding and passing values of these vectors, then we can have $(1+o(1))$-approximation values for all the records in the table.  

Unfortunately, in the forgetting node for a vertex $v$, we need to compare the value of $\tod(v)$ and the capacity value $c(v)$. This task seems impossible if we do not have the exact value of $\tod(v)$. Our idea is to modify the ``slightly-violating-capacity'' solution, based on two crucial observations.  The first is that, in a solution, for any vertex $v\in V$, the number of edges incident to $v$ which are \textbf{not} covered by $v$ presents a lower bound for the solution size. The second observation is that one can test if a ``slightly-violating-capacity''  solution can be turned into a good one  in polynomial time. These observations are formally presented in Lemma \ref{lb} and \ref{extent}. 



\medskip

\noindent\textbf{Target Set Selection  and Vector Dominating Set.} We observe that both of the TSS and VDS problems are \emph{monotone} and \emph{splittable}, where the monotone property states that  any super set of a solution is still a solution and the splittable property means that for any separator $X$ of the input graph $G$, the union of $X$ and solutions for components after removing $X$ is also a solution for the  graph $G$. We give a general approximation for vertex subset problems that are monotone and splittable.
The key of our approximation algorithm is an observation that any bag in a tree decomposition is a separator in $G$. As the problem is splittable, we can design a procedure to find a bag, and remove it, which leads to a separation of $G$ and we then deal with the component ``rooted'' by this bag. We can use this procedure repeatedly until the whole graph is done.
\subsection{Organization of the Paper}
In Section \ref{Preliminaries} the basic notations are given, and we formally define the problem we study.
In Section \ref{Exact Algorithm} we present the exact algorithm for CVC. In Section \ref{Approximate Algorithm} we present the approximate algorithm for CVC. In Section~\ref{sec:AppTSSVDS}, we give the approximation algorithms for TSS and VDS.

\section{Preliminaries}\label{Preliminaries}
\subsection{Basic Notations}
We denote an undirected simple graph by $G=(V,E)$, where $V=[n]$ for some $n\in\mathbb{N}$ and $E\subseteq \binom{V}{2}$. Let $V(G)=V$ and $E(G)=E$ be its vertex set and edge set. For any vertex subset $S\subseteq V$, let the induced subgraph of $S$ be $G[S]$. The edges of $G[S]$ are $E[S]=E(G)\cap\binom{S}{2}$. For any $S_1,S_2\subseteq V$, we use $E[S_1,S_2]$ to denote the edge set between $S_1$ and $S_2$, i.e. $E[S_1,S_2]=\{e=(u,v)\in E\mid u\in S_1,v\in S_2\}$. For every $v\in V(G)$, we use $N(v)$ to denote the neighbors of $v$, and $d(v):=|N(v)|$.

For an orientation $O$ of a graph $G$, which can be regarded as a directed graph whose underlying undirected graph is $G$, we use $D_O^+(v)$ to denote the outdegree of $v$ and $D_O^-(v)$ its indegree. In a directed graph or an orientation, an edge $(u,v)$ is said to start at $u$ and sink at $v$. Reversing an edge is an operation, in which an edge $(u,v)$ is replaced by $(v,u)$.

In a graph $G=(V,E)$, a separator is a vertex set $X$ such that $G[V\setminus X]$ is not a connected graph. In this case we say $X$ separates $V$ into disconnected parts $C_1,C_2,...\subseteq V\setminus X$, where $C_i$ and $C_j$ are disconnected for all $i\neq j$ in $G[V\setminus X]$. 

Let $f:A\rightarrow B$ be a mapping. For a subset $A'\subseteq A$, let $f[A']$ denote the mapping with domain $A'$ and $f[A'](a)=f(a)$, for all $a\in A'$. Let $f\setminus a$ be $f[A\setminus\{a\}]$. For all $b\in B$, let $f^{-1}(b)$ be the set $\{a\in A' \mid f(a)=b\}$.

Let $\gamma\geq 0$ be a small value, we use $\mathbb{N}_\gamma$ to denote $\{0\}\cup \{(1+\gamma)^x \mid x\in \mathbb{N}\}$. For $a,b\in \mathbb{R}$, we use $a\sim_\gamma b$ to denote that $b/(1+\gamma)\leq a\leq (1+\gamma)b$. It's easy to see this is a symmetric relation. Further, we use $[a]_\gamma$ to denote $\max_{x\in \mathbb{N}_{\gamma},x\leq a} x$. Notice that $[a]_\gamma\sim_\gamma a$.

\subsection{Problems}

 \noindent\textbf{Capacitated Vertex Cover:} An instance of CVC consists of a graph $G=(V,E)$ and a capacity function $c:V\rightarrow \mathbb{N}$ on the vertices. A solution is a pair $(S,M)$ where $S\subseteq V$ and $M:E\rightarrow S$ is a mapping. If for all $v\in S,|M^{-1}(v)|\leq c(v)$ and for all $e\in E,M(e)\in e$, then we say that $S$ is feasible. The size of a feasible solution is $|S|$. The goal of CVC is to find a feasible solution of minimum size.
An equivalent description of this problem is the following. Let $O$ be an orientation of all the edges in $E$. $O$ is a feasible solution if and only if for all $v\in V,D^-_O(v)\leq c(v)$. The size of $O$ is defined as $|\{v\in V \mid d^-(v)>0\}|$. Here we actually use a directed edge $( u,v)$ to represent that $\{u,v\}$ is covered by $v$. We mainly use this definition as it's more convenient for organizing our proof and analysis. 

\medskip

\noindent\textbf{Target Set Selection:}
 Given a graph $G=(V,E)$, a threshold function $t:V\rightarrow \mathbb{N}$, and a set $S\subseteq V$, the set $S'\subseteq V$ which contains the vertices activated by $S$ is the smallest set that:
\begin{itemize}
    \item $S\subseteq S'$;
    \item For a vertex $v$, if $|N(V)\cap S'|\geq t(v)$, then $v\in S'$.
\end{itemize}
One can find the vertices activated by $S$ in polynomial time. Just start from $S':=S$, as long as there exists a vertex $v$ such that $|N(v)\cap S'|\geq t(v)$, add $v$ to $S'$, until no such vertex exists. A vertex set that can activate all vertices in $V$ is called a target set. The goal of TSS is to find a target set of minimum size.
\medskip

\noindent\textbf{Vector Dominating Set:} 
Given a graph $G=(V,E)$, a threshold function $t:V\rightarrow \mathbb{N}$, the goal of Vector Dominating Set problem is to find a  minimum vertex subset $S\subseteq V$ such that every vertex $v\in V\setminus S$ satisfies $|N(v)\cap S|\geq t(v)$.

\subsection{Tree Decomposition}
In this paper, we consider   problems parameterized by the treewidth of the input graphs. A  tree decomposition of a graph $G$ is a pair $(T,\mathcal{X})$ such that
\begin{itemize}
    \item $T$ is a rooted tree and $\mathcal{X}=\{X_\alpha :\alpha\in V(T),X_\alpha\subseteq V(G)\}$ is a collection of subsets of $V(G)$;
    \item $\bigcup_{\alpha\in V(T)}X_\alpha=V(G)$;
    \item For every edge $e$ of $G$, there exists an $\alpha\in V(T)$ such that $e\subseteq X_\alpha$;
    \item For every vertex $v$ of $G$, the set $\{\alpha\in V(T)\mid v\in X_\alpha\}$ forms a subtree of $T$.
\end{itemize}
 The width of a tree decomposition $(T,\mathcal{X})$  is $\max_{\alpha\in V(T)}{|X_\alpha|-1}$. The treewidth of a graph $G$ is the minimum width over all its tree decompositions. 
 
The sets in $\mathcal{X}$ are called ``bags''. For a node $\alpha\in V(T)$, let $T_\alpha$ denote the subtree of $T$ rooted by $\alpha$. Let $V_\alpha\subseteq V$ denote the vertex set $V_\alpha=\cup_{\alpha'\in V(T_\alpha)} X_{\alpha'}$. Let $Y_\alpha:=V_\alpha\setminus X_\alpha$. 
For a node $\alpha$, we use $\alpha_1(,\alpha_2)$ to denote its possible children. By the definition of tree decompositions, for a join node $\alpha$, $Y_{\alpha_1}\cap Y_{\alpha_2}=\emptyset$.

It is convenient to work on a \emph{nice tree decomposition}.  
Every node $\alpha\in V(T)$ in this nice tree decomposition is expected to be one of the following:
\begin{romanenumerate}
    \item \textbf{Leaf Node:} $\alpha$ is a leaf and $X_\alpha=\emptyset$;
    \item \textbf{Introducing $v$ Node:} $\alpha$ has exactly one child $\alpha_1$, $v\notin X_{\alpha_1}$ and 
 $X_\alpha=X_{\alpha_1}\cup \{v\}$;
    \item \textbf{Forgetting $v$ Node:} $\alpha$ has exactly one child $\alpha_1$, $v\notin X_{\alpha}$ and $X_\alpha\cup \{v\}=X_{\alpha_1}$;
    \item \textbf{Join Node:} $\alpha$ has exactly two children $\alpha_1$, $\alpha_2$ and $X_{\alpha}=X_{\alpha_1}=X_{\alpha_2}$.
\end{romanenumerate}

Treewidth is a popular parameter to consider because tree decompositions with optimal or approximate treewidth can be efficiently computed~\cite{korhonen2022}. We refer the reader to~\cite{cygan2015parameterized,bodlaender1994tourist,bodlaender1993tw} for more details of treewidth and nice tree decomposition. 
Using the tree balancing technique~\cite{BodlaenderH98} and the method of introducing new nodes, we can transform any tree decomposition with width $w$ in polynomial time into a nice tree decomposition with width $O(w)$,  depth upper bounded by $O(w^2\log n)$, and containing at most $O(nw)$ nodes.
Moreover, we can add $O(w)$ nodes so that the root $\alpha_0$ is assigned with an empty set $X_{\alpha_0}=\emptyset$. Notice that in this case, $Y_{\alpha_0}=V_{\alpha_0}=V$. We assume all the nice tree decompositions discussed in this paper satisfy these properties.



\section{Exact Algorithm for CVC}\label{Exact Algorithm}
We present the exact algorithm for two reasons. The first is that one can gain some basic insights on the structure of the approximate algorithm by understanding the exact algorithm, which is more comprehensible. The other is that we need to compare the intermediate results of the exact algorithm and the approximate algorithm, so the total description of the algorithm can also be regarded as a recursive definition of the intermediate results (which are the sets $\rcd$'s defined in the following).

\subsection{Definition of the Tables}

Given a tree decomposition $(T,\mathcal{X})$, we run a classical bottom-up dynamic program to solve CVC. That is, on each node $\alpha$ we allocate a record set $\rcd$. $\rcd$ contains records of the form $(\tod,k)$. A record $(\tod,k)$ consists of two elements: a mapping $\tod:X_\alpha\rightarrow \mathbb{N}$ and an integer $k\in \mathbb{N}$.  
At first, we  present a  definition of $\rcd$ by its properties. 
Then we define $\rcd$ according to the \textbf{Recursive Rules}. If our goal is only to design an exact algorithm for CVC, then there could be many different definitions of the tables which all work. However, here our definitions are elaborated so that they fit in our analysis of the approximation algorithm.
After these definitions are given, later in Theorem \ref{thm:Rrule} we prove that these two definitions coincide.

Let $G_\alpha$ denote the graph with vertex set $V_\alpha$ and edge set $E[V_\alpha]\setminus E[X_\alpha]$. We expect that the table $\rcd$  has the following properties. 
\subsubsection{Expected Properties for $\rcd$}
A record $(\tod,k)\in \rcd$ if and only if there exists $O$, an orientation of $G_\alpha$, such that
\begin{bracketenumerate}
    \item For each $v\in X_\alpha$, $\tod(v)=D_O^+(v)$ is just its out degree;
    \item $D^-_O(v)\leq c(v)$ for all $v\in Y_\alpha$;
    \item $|\{v\in Y_\alpha \mid D_O^-(v)>0\}|\leq k\leq |Y_\alpha|$.
\end{bracketenumerate}
Intuitively, $(\tod,k)\in \rcd$ if there exists a vertex set $S\subseteq Y_\alpha$ and a mapping $M:E[V_\alpha]\setminus E[X_\alpha]\rightarrow S\cup X_\alpha$ such that
\begin{itemize}
\item all edges are covered correctly, i.e. $M(e)\in e$ for all $e\in E[V_\alpha]\setminus E[X_\alpha]$;
    \item for each $v\in X_\alpha$, there are $\tod(v)$ edges from $v$ to $Y_\alpha$ that are covered by $S$, i.e. $|E[\{v\},Y_\alpha]\cap \cup_{u\in S} M^{-1}(u)|=\tod(v)$;  
    \item $M$ satisfies the capacity constraints for vertices in $Y_\alpha$, i.e. for all $v\in Y_\alpha$, $|M^{-1}(v)|\leq c(v)$;
\item $|S|\le k\le |Y_\alpha|$.    
\end{itemize}
One can imagine that $S$ is a feasible solution for a spanning subgraph of $G_\alpha$, where the vector $\tod$ governs the edges between $X_\alpha$ and $Y_\alpha$.

Note that the root node $\alpha_0$ satisfies $X_{\alpha_0}=\emptyset$, and $G_{\alpha_0} = G$. So if $\rcds{0}$ is correctly computed, then the $k$ values in those records in $\rcds{0}$ have a one-to-one correspondence to all feasible solution sizes for the original instance. We output $\min_{(\tod,k)\in \rcds{0}} k$ to solve the instance.

\subsubsection{Recursive Rules for $\rcd$}
Fix a node $\alpha \in V(T)$, if $\alpha$ is a introducing node or a forgetting node, let $\alpha_1$ be its child. If $\alpha$ is a join node, let $\alpha_1,\alpha_2$ be its children. 
In case $\alpha$ is a:
\begin{description}
\item[Leaf Node.] $\rcd=\{(\tod,k)\}$, in which $\tod$ is a mapping with empty domain and $k:=0$.

 \item[Introducing $v$ Node.] Note that by the properties of tree decompositions, there is no edge between $v$ and $Y_\alpha$ in $G$.
A record $(\tod,k)\in \rcd$ if and only if $(\tod\setminus v,k)\in \rcds{1}$ and $\tod(v)=0$. 

\item[Join Node.] $(\tod,k)\in \rcd$ if and only if there exist $(\tod_1,k_1)\in \rcds{1}$ and $(\tod_2,k_2)\in \rcds{2}$ such that for all $ v\in X_\alpha$, $\tod(v)=\tod_1(v)+\tod_2(v)$ and $k=k_1+k_2$.

\item[Forgetting $v$ Node.]$(\tod,k)\in \rcd$ if and only if there exists $(\tod_1,k_1)\in \rcds{1}$ satisfying one of the following conditions:
\begin{bracketenumerate}
    \item $k_1=k$, $\tod_1(v)=|N(v)\cap Y_\alpha|$ and $\tod_1\setminus v=\tod$. In this case,  $v$ is not ``included in $S$''. All the edges between $v$ and $Y_\alpha$ must be covered by other vertices in $Y_\alpha$.
    \item $k_1=k-1$ and there exist $ \Delta(v)\subseteq N(v)\cap X_\alpha$ and $A\in [|N(v)\cap Y_\alpha|-c(v)+|\Delta(v)|,|N(v)\cap Y_\alpha|]$ such that $\tod_1(v) = A$, $\tod_1(u)=\tod(u)-1$ for all $u\in \Delta(v)$, and $\tod_1(u)=\tod(u)$ for all $u\in X_{\alpha_1}\setminus (\Delta(v)\cup \{v\})$. In this case, $v$ is ``included in $S$''. We enumerate a set $\Delta(v)\subseteq N(v)\cap X_\alpha$ of edges between $v$ and $X_\alpha$ and let $v$ cover these edges. Note that for a record $(\tod_1,k_1)\in \rcds{1}$, there are $|N(v)\cap Y_\alpha|-\tod_1(v)$ edges that are covered by $v$. To construct $(\tod,k)$ from $(\tod_1,k_1)$, we need to check that $c(v)\ge |\Delta(v)|+|N(v)\cap Y_\alpha|-\tod_1(v)$, which is implicitly done by the setting $\tod_1(v)=A\geq |N(v)\cap Y_\alpha|-c(v)+|\Delta(v)|$.
\end{bracketenumerate}
\end{description}
\begin{remark}
In fact, one can find many different ways to define the dynamic programming table for CVC. We use this definition because we want to upper bound the values of records in $\rcd$ by the size of solution (Lemma \ref{lb}), so we need to record ``outdegrees'' rather than ``indegrees'' or ``capacities''. 
\end{remark}

\noindent\textbf{Valid certificate.}  Notice that all the rules are of the form $(\tod_1,k_1)\in \rcds{1}\Rightarrow (\tod,k)\in \rcd$ or $(\tod_1,k_1)\in \rcds{1}\wedge(\tod_2,k_2)\in \rcds{2}\Rightarrow (\tod,k)\in \rcd$, thus a rule can actually be divided in to two parts: we found a ``valid certificate'' $(\tod_1,k_1)\in \rcds{1}$ (and $(\tod_2,k_2)\in \rcds{2}$, for join nodes), then we add a ``product'' $(d,k)\in \rcd$ based on the certificate. 
In fact, every record in $\rcds{1}$ can be a valid certificate in introducing nodes, and every pair of records $((\tod_1,k_1),(\tod_2,k_2))\in \rcds{1}\times \rcds{2}$ can be a valid certificate in join nodes. But in forgetting $v$ nodes, we further require that $\tod_1(v)$ satisfies some condition. To be specific, in a forgetting node $\alpha_1$  we say $(\tod_1,k_1)\in \rcds{1}$ is {valid} if it satisfies the following condition:


\begin{itemize}
    \item[($\star$)] $\tod_1(v)=|N(v)\cap Y_{\alpha}|$ or $\geq|N(v)\cap Y_\alpha|-c(v)+|\Delta(v)|$ for some $\Delta(v)\subseteq N(v)\cap X_\alpha$.
\end{itemize} 


\begin{theorem}\label{thm:Rrule}
 The set $\{\rcd : \alpha\in V(T)\}$ can be computed by the recursive rules above in time $n^{w+O(1)}$, and the \textbf{Expected Properties} are satisfied. 
\end{theorem}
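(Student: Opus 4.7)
The plan is a bottom-up induction on the nice tree decomposition, showing for each node $\alpha$ that the set $\rcd$ produced by the \textbf{Recursive Rules} coincides with the set of pairs $(\tod,k)$ satisfying the \textbf{Expected Properties}. The base case is a leaf $\alpha$ with $X_\alpha = V_\alpha = \emptyset$ and $G_\alpha$ empty; the unique (empty) orientation yields the single record $(\emptyset,0)$ under both definitions. For the inductive step, I would treat each node type by exploiting the corresponding identity between $E(G_\alpha)$ and $E(G_{\alpha_1})$ (and $E(G_{\alpha_2})$), and then verify both inclusions separately.

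For an \emph{introducing $v$ node}, the subtree property of tree decompositions forces $v \notin V_{\alpha_1}$, so every edge of $G$ incident to $v$ with its other endpoint in $V_\alpha$ must have that other endpoint in $X_\alpha$, placing the edge in $E[X_\alpha]$ and excluding it from $G_\alpha$; consequently $E(G_\alpha) = E(G_{\alpha_1})$, $v$ is isolated in $G_\alpha$, and $\tod(v)$ is forced to $0$. For a \emph{join node}, any edge with both endpoints in $V_{\alpha_1} \cap V_{\alpha_2}$ necessarily lies in $X_\alpha$ and is excluded, so $E(G_\alpha)$ is the disjoint union of $E(G_{\alpha_1})$ and $E(G_{\alpha_2})$; combined with $Y_{\alpha_1} \cap Y_{\alpha_2} = \emptyset$, every orientation $O$ of $G_\alpha$ decomposes uniquely as a pair $(O_1,O_2)$, yielding $\tod(u) = \tod_1(u) + \tod_2(u)$ on $X_\alpha$ and additivity of the $k$-bound.

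The delicate case, and where I expect most of the bookkeeping, is the \emph{forgetting $v$ node}. Here $E(G_\alpha) \setminus E(G_{\alpha_1})$ is exactly the set of edges $\{v,u\}$ with $u \in N(v) \cap X_\alpha$, and these are the new edges that must be oriented. An orientation $O$ of $G_\alpha$ restricts to an orientation of $G_{\alpha_1}$, and the two subcases correspond to whether $D_O^-(v)=0$ or $D_O^-(v)>0$. In the first, every edge incident to $v$ in $G_\alpha$ points away from $v$, forcing $\tod_1(v) = |N(v)\cap Y_\alpha|$ and $\tod(u) = \tod_1(u)$ for all $u \in X_\alpha$. In the second, let $\Delta(v) \subseteq N(v) \cap X_\alpha$ record the new edges oriented into $v$; each such edge contributes $+1$ to $\tod(u)$, while within $G_{\alpha_1}$ exactly $|N(v)\cap Y_\alpha| - \tod_1(v)$ edges already point into $v$, so the capacity inequality $D_O^-(v) \le c(v)$ rearranges to $\tod_1(v) \ge |N(v)\cap Y_\alpha| - c(v) + |\Delta(v)|$. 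Conversely, given a valid certificate satisfying $(\star)$, orienting the $\Delta(v)$-edges into $v$ and the remaining new edges away from $v$ produces a feasible orientation of $G_\alpha$; condition $(\star)$ is exactly what guarantees that the resulting indegree of $v$ respects the capacity bound.

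For the running time, each $\rcd$ has at most $(n+1)^{w+2}$ entries since $\tod:X_\alpha\to[n]$ with $|X_\alpha|\le w+1$ and $k\in[n]$. The leaf, introducing, and forgetting rules process each child entry in $2^{O(w)}\, n^{O(1)}$ time (the $2^{O(w)}$ coming from enumerating $\Delta(v)$), while a join node enumerates all pairs of child records in time $n^{O(w)}$. Summing over the $O(nw)$ nodes of the decomposition yields the claimed $n^{w+O(1)}$ bound.
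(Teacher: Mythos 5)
Your proposal follows essentially the same route as the paper's own (much terser) proof sketch: a bottom-up induction over the nice tree decomposition, proving both inclusions at each node type by restricting/assembling orientations of $G_\alpha$ from those of the children, with the forgetting node handled via the case split on whether $D_O^-(v)=0$ and the rearrangement of the capacity constraint into condition $(\star)$; the edge-set identities you state, $E(G_\alpha)=E(G_{\alpha_1})$ for introduce nodes, $E(G_\alpha)=E(G_{\alpha_1})\sqcup E(G_{\alpha_2})$ for join nodes, and $E(G_\alpha)\setminus E(G_{\alpha_1})=E[\{v\},N(v)\cap X_\alpha]$ for forget nodes, are exactly the facts the paper leaves implicit. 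The one loose point is the running time, since enumerating all pairs of child records at a join node gives $n^{2w+O(1)}$ rather than $n^{w+O(1)}$, but the paper's own argument has the same looseness (its appendix only concludes $n^{O(w)}$), so this does not distinguish your proof from theirs.
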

The proof sketch of the correctness of these rules are presented in Appendix \ref{proofexact}. As $|\rcd|\leq n^{w+2}$ for all $\alpha\in V(T)$ and the enumerating $\Delta(v)$ procedure in dealing with a forgetting node runs in time $w^{O(w)}$, it's not hard to see that this algorithm runs in time $n^{w+O(1)}$ (for $w$ small enough compared to $n$).

\section{Approximation Algorithm for CVC}\label{Approximate Algorithm}
Let $\epsilon$ be a small value to be determined later. We try to compute an approximate record set $\rcdap$ for each node $\alpha$, still using bottom-up dynamic programming like what we do in the exact algorithm. An approximate record is a pair $(\hat{\tod},\hat{k})$, where $\hat{k}\in \mathbb{N}$ and $\hat{\tod}$ is a mapping from $X_\alpha$ to $\mathbb{N}_{\epsilon}=\{0\}\cup\{(1+\epsilon)^x \mid x\in \mathbb{N}\}$. As we can see, $\hat{\tod}$ can take non-integer values.

\textbf{Height of a Node}
The height $h$ of a node $\alpha$ is defined by the length of the longest path from $\alpha$ to a leaf which is descendent to $\alpha$.
By this definition, the height of a node is $1$ plus the maximum height among the heights of its children.
Let the height of the root node be $h_0$. According to the property of nice tree decompositions, $h_0$ is at most $O(w^2\log n)$.

Let $\epsilon_h,\delta_h$ be two non-negative values (which are functions of $h,n$ and $w$) to be determined later.

\textbf{$h$-close records.} If an exact record $(\tod,k)$ and an approximate record $(\hat{\tod},\hat{k})$ satisfy $\tod(v)\sim_{\epsilon_h} \hat{\tod}(v)$ for all $v\in X_\alpha$ and $k\sim_{\delta_h}\hat{k}$, then we say these two records are $h$-close. 

We expect that for each node $\alpha$, $\rcdap$ satisfies the following. Let the height of $\alpha$ be $h$.
\begin{enumerate}[(A)]
    \item If $(\tod,k)\in \rcd$, then there exists $(\hat{\tod},\hat{k})\in \rcdap$ which is $h$-close to $(\tod,k)$.
    \item If $(\hat{\tod},\hat{k})\in \rcdap$, then there exists $(\tod,k)\in \rcd$ which is $h$-close to $(\hat{\tod},\hat{k})$.
\end{enumerate}
After $\rcdaps{0}$ is correctly computed (i.e. satisfying (A) and (B)), we output the value $\hat{k}_{\min}=(1+\delta_{h_0})\min_{(\hat{\tod},\hat{k})\in \rcdaps{0}}\hat{k}$. Let $OPT$ be the size of the minimum solution, which equals to $\min_{(\tod,k)\in \rcds{0}} k$. We claim that $\hat{k}_{\min}\in [OPT,(1+\delta_{h_0})^2 OPT]$.
\begin{proof}
    By property (B), we have $OPT\leq (1+\delta_{h_0})\min_{(\hat{\tod},\hat{k})\in \rcdaps{0}}\hat{k}$. By property (A), we have $\min_{(\hat{\tod},\hat{k})\in \rcdaps{0}}\hat{k}\leq (1+\delta_{h_0})OPT$. The claim follows by combining these two inequalities.
\end{proof}
We need the following procedure to  test in polynomial time if a sub-problem is solvable when we are allowed to use all vertices to cover the edges.
\begin{lemma}\label{pl}
    Testing whether $(\tod,|Y_\alpha|)\in \rcd$ for any $\tod$ can be done in $n^{O(1)}$ time.
\end{lemma}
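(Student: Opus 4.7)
The plan is to observe that, with $k=|Y_\alpha|$, condition (3) of the Expected Properties in $\rcd$ is vacuous, since $\{v\in Y_\alpha\mid D_O^-(v)>0\}$ is always a subset of $Y_\alpha$. Hence deciding $(\tod,|Y_\alpha|)\in\rcd$ reduces to the following orientation feasibility question: does $G_\alpha$ admit an orientation $O$ with $D_O^+(v)=\tod(v)$ for every $v\in X_\alpha$ and $D_O^-(v)\le c(v)$ for every $v\in Y_\alpha$? Because $G_\alpha$ omits all edges inside $X_\alpha$, each of its edges is either in $E[X_\alpha,Y_\alpha]$ or in $E[Y_\alpha]$, and the degree of any $x\in X_\alpha$ inside $G_\alpha$ equals $|N(x)\cap Y_\alpha|$.

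I would encode this feasibility question as a single integer max-flow computation on an auxiliary network with source $s$ and sink $t$. Add an arc $s\to x$ of capacity $\tod(x)$ for every $x\in X_\alpha$; an arc $x\to y$ of capacity $1$ for every edge $\{x,y\}\in E[X_\alpha,Y_\alpha]$; for every edge $\{y_1,y_2\}\in E[Y_\alpha]$ introduce an auxiliary node $e$ together with arcs $s\to e$, $e\to y_1$, $e\to y_2$, each of capacity $1$; finally, an arc $y\to t$ of capacity $c(y)$ for every $y\in Y_\alpha$. Set the target value $T:=\sum_{x\in X_\alpha}\tod(x)+|E[Y_\alpha]|$. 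The cut $(\{s\},V\setminus\{s\})$ has capacity exactly $T$, so the max flow is at most $T$; the test accepts iff the max flow equals $T$.

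The main equivalence I would then prove is that a desired orientation exists iff this network carries an integral flow of value $T$. In the forward direction, read the flow off the orientation: the arc $x\to y$ carries $1$ iff $\{x,y\}$ is oriented $x\to y$, and each auxiliary node $e$ routes its unit to the head of the orientation of $\{y_1,y_2\}$; this gives an integral flow of value $T$. In the converse direction, attaining $T$ forces every $s\to x$ and every $s\to e$ arc to be saturated, by a one-line counting argument against the capacity of the $s$-cut; reading the saturation pattern back yields an orientation with $D_O^+(x)=\tod(x)$ exactly, and $D_O^-(y)$ equal to the flow on $y\to t$, hence at most $c(y)$. The degenerate case $\tod(x)>|N(x)\cap Y_\alpha|$ is handled automatically: the outgoing capacity at $x$ in the network is only $|N(x)\cap Y_\alpha|$, so the max flow falls strictly short of $T$ and the instance is correctly rejected.

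Since integer max-flow on this auxiliary network (whose size is polynomial in $|V(G)|+|E(G)|$) can be computed in polynomial time, the overall membership test runs in $n^{O(1)}$. I foresee no substantive obstacle; the only point requiring care is the counting argument forcing simultaneous saturation of all $s\to x$ and all $s\to e$ arcs, which is immediate from the value of the trivial $s$-cut being exactly $T$.
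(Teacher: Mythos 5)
Your proposal is correct and uses essentially the same idea as the paper: reduce the membership test (where condition~(3) is vacuous for $k=|Y_\alpha|$) to an integral maximum-flow computation and invoke the integrality theorem. The only difference is cosmetic: the paper creates a source-adjacent gadget node for \emph{every} edge of $G_\alpha$ and caps the in-degree of each $x\in X_\alpha$ by an arc to the sink of capacity $|N(x)\cap Y_\alpha|-\tod(x)$ (which only forces $D_O^+(x)\ge\tod(x)$ and then fixes equality by reversing some edges in $E[\{x\},Y_\alpha]$), whereas you push $\tod(x)$ units through $x$ from the source so that $D_O^+(x)=\tod(x)$ holds directly by flow conservation, reserving edge gadgets for $E[Y_\alpha]$ only.
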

\begin{proof}
    Construct a directed graph with vertex set $\{s,t\}\cup (E[V_\alpha]\setminus E[X_\alpha])\cup V_\alpha$. 
    For each $e\in (E[V_\alpha]\setminus E[X_\alpha])$ add an edge $(s,e)$ with capacity $1$. For each $e=(u,v)\in (E[Y_\alpha]\setminus E[X_\alpha])$ add an edge $(e,u)$ and an edge $(e,v)$ both with capacity $1$. For each $v\in X_\alpha$ add an edge $(v,t)$ with capacity $|N(v)\cap Y_{\alpha}|-\tod(v)$. For each $v\in Y_\alpha$ add an edge $(v,t)$ with capacity $c(v)$. We claim that $(\tod,|Y_\alpha|)\in \rcd$ if and only if there is a flow from $s$ to $t$ with value $|E[V_\alpha]\setminus E[X_\alpha]|$.  For the 'if' part, notice that by the well-known integrality theorem for network flow, there exists a integral flow with the same value. Every integral flow with this value can be transform to an $O$ as expected in the \textbf{Expected Properties}: An edge $e\in E[Y_\alpha]\setminus E[X_\alpha]$ is oriented so that it sinks at vertex $v$ if $(e,v)$ has flow value $1$, then for each vertex $v\in X_\alpha$, reverse some edges in $E[\{v\},Y_\alpha]$ so that $D_O^+(v)=\tod(v)$, if the flow carried in $(v,t)$ is less than $|N(v)\cap Y_{\alpha}|-\tod(v)$. One can construct a flow with the value based on an orientation $O$, too. Thus the 'only if' part is easy to see, too.
\end{proof}

We first define $\{\rcdap:\alpha\in V(T)\}$ using the following \textbf{Recursive Rules}. Then we prove that these sets satisfy the properties (A) and (B). The basic idea of our approximate algorithm is to run the exact algorithm in an ``approximate way''. 
For a rule formed as $(\hat{\tod}_1,\hat{k}_1)\in \rcdaps{1}\Rightarrow (\hat{\tod},\hat{k})\in \rcdap$ or $(\hat{\tod}_1,\hat{k}_1)\in \rcdaps{1}\wedge (\hat{\tod}_2,\hat{k}_2)\in \rcdaps{2}\Rightarrow (\hat{\tod},\hat{k})\in \rcdap$, we also call $(\hat{\tod}_1,\hat{k}_1)$ (and $(\hat{\tod}_2,\hat{k}_2)$) the certificate while $(\hat{\tod},\hat{k})$ is the product.

\subsection{Recursive Rules for $\rcdap$}

Fix a node $\alpha\in V(T)$ with height $h$, in case $\alpha$ is a:
\begin{description}
    
\item[Leaf Node.] $\rcdap=\{(\hat{\tod},\hat{k})\}$, in which $\hat{\tod}$ is a mapping with empty domain and $\hat{k}=0$.

\item[Introducing $v$ Node.]

A record $(\hat{\tod},\hat{k})\in \rcdap$ if and only if $(\hat{\tod}\setminus v,\hat{k})\in \rcdaps{1}$ and $\hat{\tod}(v)=0$.

\item[Join Node.] $(\hat{\tod},\hat{k})\in \rcdap$ if and only if there exists $(\hat{\tod}_1,\hat{k}_1)\in \rcdaps{1},(\hat{\tod}_2,\hat{k}_2)\in \rcdaps{2}$ such that for each $v\in X_\alpha$, $\hat{\tod}(v)=[\hat{\tod}_1(v)+\hat{\tod}_2(v)]_\epsilon$ and $\hat{k}=\hat{k}_1+\hat{k}_2$.

\item[{Forgetting $v$ Node}.] This case is the most complicated. Think in this way: we pick $(\hat{\tod}_1,\hat{k}_1)\in \rcdaps{1}$ and based on it we try to construct $(\hat{\tod},\hat{k})$ to add into $\rcdap$. Notice that in the exact algorithm, not every $(\tod_1,k_1)\in \rcds{1}$ can be used to generate a corresponding product $(\tod,k)\in \rcd$ --- it has to be the case that $\tod_1(v)=|N(v)\cap Y_\alpha|$ or $\tod_1(v)\geq|N(v)\cap Y_\alpha|-c(v)+|\Delta(v)|$, which is what we called to be a valid certificate. We have to test both the validity of the certificate and its exact counterpart using an indirect way. 
So there are three issues we need to address:
\begin{enumerate}[(a)]
    \item The requirement for $(\hat{\tod}_1,\hat{k}_1)$ being valid, i.e. satisfying the ``approximate version'' of condition $(\star)$;
    \item There exists a valid exact counterpart $(\tod_1,k_1)\in\rcds{1}$ of $(\hat{\tod}_1,\hat{k}_1)$ satisfying condition $(\star)$;
    \item How to construct $(\hat{\tod},\hat{k})$.
\end{enumerate}

(b) seems impossible since we do not compute $\rcds{1}$, we obtain this indirectly using Lemma~\ref{pl}. Later we explain why such an approach reaches our goal. Formally, suppose we have $(\hat{\tod}_1,\hat{k}_1)\in \rcdaps{1}$, we consider two cases:
\begin{enumerate}[(1)]
    \item $v$ is not ``included''.
    \begin{description}
        \item (1a) See if $\hat{\tod}_1(v)\sim_{\epsilon_{h-1}}|N(v)\cap Y_\alpha|$;
        \item (1b) See if $(\tod_t,|Y_{\alpha_1}|)\in \rcds{1}$, where
        $\tod_t(u)=\lceil\hat{\tod}_1(u)/(1+\epsilon_{h-1})\rceil$ for all $u\in X_{\alpha_1}\setminus \{v\}$ and $\tod_t(v)=|N(v)\cap Y_\alpha|$  (This is polynomial-time tractable by Lemma \ref{pl});
        \item (1c) If (a) and (b) are satisfied, then add $(\hat{\tod},\hat{k})$ to $\rcdap$, where $\hat{\tod}=\hat{\tod}_1\setminus v,\hat{k}=\hat{k}_1$.
    \end{description}
    \item $v$ is ``included''. We enumerate $\Delta(v)\subseteq N(v)\cap X_\alpha$ and integer $A$ satisfying $A\in [|N(v)\cap Y_\alpha|-c(v)+|\Delta(v)|,|N(v)\cap Y_\alpha|]$.
    \begin{description}
        \item (2a) See if $\hat{\tod}_1(v)\geq A/(1+\epsilon_{h-1})$;
        \item (2b) See if $(\tod_t,|Y_{\alpha_1}|)\in \rcds{1}$, where $\tod_t(u)=\lceil \hat{\tod}_1(u)/(1+\epsilon_{h-1}) \rceil$ for all $u\in X_{\alpha_1}\setminus \{v\}$ and $\tod_t(v)=A$ (By Lemma \ref{pl}, this is still polynomial-time tractable);
        \item (2c) If (a) and (b) are satisfied, then add $(\hat{\tod},\hat{k})$ to $\rcdap$, where $\hat{\tod}(u)=\hat{\tod}_1(u)$ for all $u\in X_\alpha\setminus \Delta(v)$, $\hat{\tod}(u)=[\hat{\tod}(u)+1]_\epsilon$ for all $u\in \Delta(v)$, $\hat{k}=\hat{k}_1+1$.
    \end{description}
\end{enumerate}
\end{description}

\begin{theorem}\label{correctness}
    Set $\epsilon = \frac{1}{(w^2\log n)^3}$, $\epsilon_h=2h\epsilon$ and $\delta_{h}= 4(h+1)h\epsilon$. Suppose $n$ is great enough. When the dynamic programming is done, for all $\alpha$, $\rcdap$ satisfies property (A) and (B).
\end{theorem}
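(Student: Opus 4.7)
The plan is to prove Theorem \ref{correctness} by induction on the height $h$ of the node $\alpha$, establishing both (A) and (B) simultaneously. The base case $h=0$ (leaf nodes) is immediate: $\rcd$ and $\rcdap$ each contain exactly one trivial record $(\tod,0)$ whose $\tod$ has empty domain, so both properties hold with zero error. For the inductive step at height $h$, I assume (A) and (B) hold for every child of $\alpha$ (of height at most $h-1$) and handle the four node types separately.

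The introducing-$v$ and join cases reduce to direct error-propagation arguments. For an introducing node, the recursive rules for $\rcd$ and $\rcdap$ are identical except for extending the mapping by a zero coordinate at $v$, so any $(h-1)$-close pair at $\alpha_1$ lifts directly to an $h$-close pair at $\alpha$. For a join node, given $(\tod,k)\in \rcd$ witnessed by $(\tod_1,k_1)$ and $(\tod_2,k_2)$, I apply induction to obtain $(h-1)$-close approximate records $(\hat{\tod}_i,\hat{k}_i)\in\rcdaps{i}$, and invoke Lemma \ref{error}, together with the one extra $(1+\epsilon)$ slack introduced by the $[\cdot]_\epsilon$ rounding, to conclude $\hat{\tod}(v)=[\hat{\tod}_1(v)+\hat{\tod}_2(v)]_\epsilon \sim_{\epsilon_{h-1}+\epsilon} \tod(v)$ for every $v\in X_\alpha$ and $\hat{k}=\hat{k}_1+\hat{k}_2\sim_{\delta_{h-1}}k$. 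Direction (B) for these cases is symmetric.

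The forgetting-$v$ node is the main obstacle, because the exact validity condition $(\star)$ compares $\tod_1(v)$ with integer thresholds, and such comparisons are inherently fragile under multiplicative rounding. For direction (A), given a valid certificate $(\tod_1,k_1)\in \rcds{1}$ producing $(\tod,k)\in \rcd$, induction yields an $(h-1)$-close $(\hat{\tod}_1,\hat{k}_1)\in \rcdaps{1}$, and I verify that the algorithm's tests (1a,b) or (2a,b) succeed for this certificate with the same choices of $A$ and $\Delta(v)$ made in the exact rule. Test (1a) or (2a) is a relaxation of $(\star)$ and holds by coordinate-wise closeness; for test (1b) or (2b), I start from an orientation $O$ witnessing $(\tod_1,k_1)\in \rcds{1}$ and observe that the target vector $\tod_t$ satisfies $\tod_t(u)=\lceil \hat{\tod}_1(u)/(1+\epsilon_{h-1})\rceil\leq \tod_1(u)$ coordinatewise, so one may reverse edges between $X_{\alpha_1}$ and $Y_{\alpha_1}$ in $O$ to reduce each out-degree to $\tod_t(u)$; such reversals only decrease $Y$-side in-degrees, hence never violate capacities, so $(\tod_t,|Y_{\alpha_1}|)\in\rcds{1}$ and Lemma \ref{pl} reports success. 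For direction (B), given $(\hat{\tod}_1,\hat{k}_1)$ that passes the tests, induction yields some $(h-1)$-close $(\tod_1',k_1')\in\rcds{1}$ which need not satisfy $(\star)$. I then modify the orientation $O''$ witnessing $(\tod_1',k_1')$ by reversing edges incident to $v$ so as to align $\tod_1'(v)$ with the validity threshold of $(\star)$; the number of such reversals is $O(\epsilon_{h-1})\cdot |N(v)\cap Y_\alpha|\leq O(\epsilon_{h-1})\cdot k_1'$ by test (1a)/(2a) combined with Lemma \ref{lb}, and each reversal activates at most one additional vertex of $Y_{\alpha_1}$. The resulting orientation witnesses a valid $(\tod_1,k_1)\in\rcds{1}$ with $k_1\leq (1+O(\epsilon_{h-1}))k_1'$ and $\tod_1$ still $(h-1)$-close to $\hat{\tod}_1$, so applying the exact forget rule to this valid certificate yields a $(\tod,k)\in \rcd$ that is $h$-close to $(\hat{\tod},\hat{k})$.

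Finally, I would check that the chosen parameters satisfy all the accumulated inequalities, the key ones being $\epsilon_h \geq \epsilon_{h-1}+\epsilon$ (from the additive-plus-rounding steps in join and forget-case-(2c)) and $\delta_h \geq \delta_{h-1}+O(\epsilon_{h-1})$ (from the extra activations during validity restoration in the forget direction (B)). With $\epsilon=1/(w^2\log n)^3$, $\epsilon_h=2h\epsilon$, and $\delta_h=4h(h+1)\epsilon$, the first gives $\epsilon_h-\epsilon_{h-1}=2\epsilon\geq \epsilon$ and the second gives $\delta_h-\delta_{h-1}=8h\epsilon\gg \epsilon_{h-1}=2(h-1)\epsilon$, both comfortably satisfied. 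Since $h\leq h_0=O(w^2\log n)$, we obtain $\epsilon_{h_0},\delta_{h_0}=O(1/(w^2\log n))$, as required for the $(1+o(1))$-approximation ratio promised in Theorem \ref{main}.
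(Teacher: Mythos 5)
Your treatment of leaf, introducing and join nodes, and of direction (A) at forgetting nodes, matches the paper's argument; your inline edge-reversal justification of the test (1b)/(2b) in direction (A) is exactly the content of Lemma \ref{monotone}, just re-derived rather than cited.

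The genuine gap is in direction (B) at a forgetting node. To turn the $(h-1)$-close exact record $(\tod_1',k_1')$ into a valid certificate you must \emph{increase} the out-degree of $v$ up to $|N(v)\cap Y_\alpha|$ (resp.\ $A$), and you propose to do this by ``reversing edges incident to $v$,'' accounting only for the extra activated vertices. But each such reversal turns an edge $(y,v)$ into $(v,y)$ and thereby increases $D^-_O(y)$ for a vertex $y\in Y_{\alpha_1}$ that is subject to the capacity constraint $D^-_O(y)\le c(y)$; if every in-neighbor of $v$ is already saturated, no such local reversal is feasible, and the modified orientation would not witness membership in $\rcds{1}$ at all. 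Bounding the \emph{number} of reversals does not address their \emph{feasibility}. This is precisely the role of Lemma \ref{extent} in the paper: it takes the second orientation guaranteed by the tested pair $(\tod_t,|Y_{\alpha_1}|)\in\rcds{1}$ --- the outcome of test (1b)/(2b), which your direction-(B) argument never invokes --- forms the disagreement graph of the two orientations, and reverses along an augmenting path from $v$ to a vertex with capacity slack, a path that in general leaves the neighborhood of $v$. Without using the flow test the claim you need is simply false; indeed, if raising $\tod_1'(v)$ were always achievable by local reversals, the test (1b)/(2b) would be superfluous in the algorithm. The parameter bookkeeping at the end is essentially correct (the required increments are $\epsilon_h\ge\epsilon_{h-1}+\epsilon+\epsilon_{h-1}\epsilon$ and $\delta_h\ge\delta_{h-1}+3\epsilon_{h-1}(1+\delta_{h-1})$, both satisfied by the stated choices for $h\le O(w^2\log n)$), but it rests on this unproved modification step.
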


\begin{proof}
    \textbf{(of Theorem \ref{main})} According to Theorem \ref{correctness} and the above discussion, we immediately get $\hat{k}_{\min}\in [OPT,(1+\delta_{h_0})^2OPT]$. By the property of nice tree decomposition, $h_0$ is at most $O(w^2\log n)$, thus $\hat{k}_{\min}\in [OPT,(1+O(1/(w^2\log n)))^2 OPT]=[OPT,(1+O(1/(w^2\log n))) OPT]$.

    The space we need to memorize each $\rcdap$ is $O((w^6\log ^4n)^w n^{O(1)})$. Computing a leaf/introduce/join node we need $O((w^6\log ^4n)^{2w} n^{O(1)})$ time. In a forgetting node, we may need to enumerate some set $\Delta(v)\subseteq N(v)\cap X_\alpha$, which requires time $O(2^{|X_\alpha|})=O(2^{w+1})$. So computing a Forgetting node requires $O((w^6\log ^4n)^w 2^w n^{O(1)})$ time. The tree size is polynomial, so the total running time is FPT.
\end{proof}

To prove Theorem~\ref{correctness}, we need a few lemmas. The proof of Lemma~\ref{monotone} and Lemma~\ref{error} are presented in Appendix~\ref{proofapprx}. 
Lemma~\ref{monotone} and Lemma~\ref{error} are some simple observations. To understand why we need Lemma~\ref{lb} and Lemma~\ref{extent}, remember that we have a complicated recursive rule for forgetting nodes in which we verifies (a) and (b). However, we cannot directly verify if a valid exact record described in (b) exists, because we don't have $\rcds{1}$. We overcome this by verifies a feasible partial solution (e.g. $(\tod_1,|Y_{\alpha_1}|)$ in (1b)) rather than an optimal one, which can be done by Lemma~\ref{pl}. When we are computing $\rcdap$, we assume that $\rcdaps{1}$ has been correctly computed, i.e. it satisfies (A) and (B). 
So there exists $(\tod_1,k_1)\in rcds{1}$ which is $h-1$-close to $(\hat{\tod}_1,\hat{k}_1)$. However, we don't know if $(\tod_1,k_1)$ is a so-called valid certificate. Lemma~\ref{extent} shows how to modify $(\tod_1,k_1)$ so that it becomes we want in (b), knowing $(\tod_t,|Y_{\alpha_1}|)\in \rcds{1}$. We introduces some error like $o(1)\tod_1(v)$ on $k_1$ in this procedure. Lemma~\ref{lb} helps us to rewrite it as $o(1)k_1$.
\begin{lemma}\label{monotone}
    If $(\tod,k)\in \rcd$ for some node $\alpha$, then for every $({\tod}',k')$ with $\tod(v)\geq {\tod}'(v)$ for all $ v\in X_\alpha$ and $k'$ satisfying $k\leq k'\leq |Y_\alpha|$, we have $({\tod}',k')\in \rcd$.
\end{lemma}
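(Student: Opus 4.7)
The plan is to exploit the orientation-based characterization in the \textbf{Expected Properties} of $\rcd$, which gives a flexible combinatorial object that can be locally modified. The intuition is that increasing $k'$ up to $|Y_\alpha|$ only relaxes constraint (3), and decreasing $d(v)$ on $X_\alpha$ vertices can be realized by reversing outgoing edges from $v$, an operation that turns out to be entirely favorable for the remaining constraints.

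First, I would invoke the assumption $(\tod,k)\in\rcd$ to obtain a witness orientation $O$ of $G_\alpha$ satisfying (1)--(3): namely $D_O^+(v)=\tod(v)$ for every $v\in X_\alpha$, $D_O^-(u)\le c(u)$ for every $u\in Y_\alpha$, and the number of $u\in Y_\alpha$ with $D_O^-(u)>0$ is at most $k$. Next, I would build a new orientation $O'$ from $O$ by processing each $v\in X_\alpha$ and reversing an arbitrary set of exactly $\tod(v)-\tod'(v)$ edges currently directed out of $v$. Such edges exist because $D_O^+(v)=\tod(v)\ge \tod'(v)$; moreover, since $G_\alpha$ contains no edges inside $X_\alpha$ (recall $E(G_\alpha)=E[V_\alpha]\setminus E[X_\alpha]$), every such outgoing edge from $v$ has its other endpoint in $Y_\alpha$, so each reversal converts an edge $(v,u)$ with $u\in Y_\alpha$ into $(u,v)$.

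Then I would verify that $O'$ certifies $(\tod',k')\in\rcd$. Property (1) is immediate by construction: $D_{O'}^+(v)=\tod(v)-(\tod(v)-\tod'(v))=\tod'(v)$ for every $v\in X_\alpha$. For property (2), note that the reversal operation only removes edges sinking at vertices of $Y_\alpha$, so $D_{O'}^-(u)\le D_O^-(u)\le c(u)$ for every $u\in Y_\alpha$. For property (3), the set $\{u\in Y_\alpha:D_{O'}^-(u)>0\}$ is a subset of $\{u\in Y_\alpha:D_O^-(u)>0\}$, hence has size at most $k\le k'\le |Y_\alpha|$, where the last two inequalities are exactly the hypotheses on $k'$.

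There is essentially no hard step here: the whole argument rests on the observation that edges incident to $X_\alpha$ in $G_\alpha$ all cross into $Y_\alpha$, which makes reversing them monotone with respect to both the capacity constraint and the ``coverage counter'' on $Y_\alpha$. The only point that requires a moment of care is to confirm that no capacity constraint applies to $X_\alpha$ in the definition of $\rcd$, so we are free to increase $D_{O'}^-(v)$ for $v\in X_\alpha$ without breaking feasibility. With these checks, both directions of the lemma follow directly from the existence of $O'$.
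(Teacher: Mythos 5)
Your proof is correct and follows essentially the same argument as the paper: take the witness orientation for $(\tod,k)$, reverse $\tod(v)-\tod'(v)$ outgoing edges at each $v\in X_\alpha$ (all of which necessarily cross into $Y_\alpha$), and observe that this can only decrease in-degrees on $Y_\alpha$, so conditions (1)--(3) hold for $(\tod',k')$. Your write-up is in fact slightly more explicit than the paper's about why the reversed edges land in $Y_\alpha$ and why no capacity constraint applies to $X_\alpha$.
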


\begin{lemma}\label{error}
    Let $a,b,a',b'\in \mathbb{R},h\in \mathbb{N}^+$, $\epsilon_h\in [0,1]$, $a'\sim_{\epsilon_h} a$ and $b'\sim_{\epsilon_h} b$. Then we have $[a'+b']_\epsilon\sim_{\epsilon_{h+1}} (a+b)$.
\end{lemma}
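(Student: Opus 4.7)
The lemma is a routine error-composition statement, so my plan is to handle the two contributing errors (the additive approximation and the rounding) one at a time and then absorb their product into $\epsilon_{h+1}$.

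First I would add the two defining inequalities of $a'\sim_{\epsilon_h}a$ and $b'\sim_{\epsilon_h}b$ coordinate-wise to conclude that $(a+b)/(1+\epsilon_h)\le a'+b'\le (1+\epsilon_h)(a+b)$, i.e.\ $(a'+b')\sim_{\epsilon_h}(a+b)$. This step uses nothing beyond monotonicity of addition and the hypothesis $\epsilon_h\in[0,1]$.

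Next I would invoke the definition of $[\cdot]_\epsilon$: since $\mathbb{N}_\epsilon$ consists of $0$ together with the powers $(1+\epsilon)^x$, the largest element of $\mathbb{N}_\epsilon$ not exceeding a positive real $x$ satisfies $[x]_\epsilon\le x<(1+\epsilon)[x]_\epsilon$, so $[a'+b']_\epsilon\sim_\epsilon (a'+b')$. Chaining the two approximations yields
\[ \frac{a+b}{(1+\epsilon)(1+\epsilon_h)}\;\le\;[a'+b']_\epsilon\;\le\;(1+\epsilon_h)(a+b). \]

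The final step is the numerical check that $(1+\epsilon)(1+\epsilon_h)\le 1+\epsilon_{h+1}$ under the parameter choice of Theorem~\ref{correctness}, where $\epsilon_h=2h\epsilon$ and $\epsilon_{h+1}=2(h+1)\epsilon$. Expanding gives $(1+\epsilon)(1+2h\epsilon)=1+(2h+1)\epsilon+2h\epsilon^2$; since $\epsilon=1/(w^2\log n)^3$ and $h\le h_0=O(w^2\log n)$ in the intended application, we have $2h\epsilon\le 1$ for sufficiently large $n$, whence $2h\epsilon^2\le\epsilon$ and the product is bounded by $1+(2h+2)\epsilon=1+\epsilon_{h+1}$. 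The reverse direction $1+\epsilon_h\le 1+\epsilon_{h+1}$ is immediate. I expect no real obstacle here; the entire lemma is a bookkeeping exercise whose only subtlety is verifying that the parameter choice leaves enough slack to absorb the additional $\epsilon$ introduced by one rounding.
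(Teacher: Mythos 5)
Your proposal is correct and follows essentially the same route as the paper's proof: add the two approximation inequalities to get $(a'+b')\sim_{\epsilon_h}(a+b)$, compose with $[a'+b']_\epsilon\sim_\epsilon(a'+b')$, and then check that $(1+\epsilon)(1+\epsilon_h)=1+\epsilon_{h+1}+\epsilon_h\epsilon-\epsilon\le 1+\epsilon_{h+1}$ using $\epsilon_h\le 1$ (which is exactly the hypothesis $\epsilon_h\in[0,1]$ rather than something needing ``$n$ sufficiently large''). The only cosmetic difference is that the paper keeps the bound symmetric via $\max\{[a'+b']_\epsilon/(a+b),(a+b)/[a'+b']_\epsilon\}$ while you note the upper side needs no extra factor since $[\cdot]_\epsilon$ rounds down; this changes nothing.
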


\begin{lemma}\label{lb}
    For all $(\tod,k)\in \rcd$ and $ v\in X_\alpha,k\geq \tod(v)$.
\end{lemma}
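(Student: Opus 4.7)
The plan is to use the semantic (Expected Properties) characterization of $\rcd$ rather than the recursive rules, since Theorem~\ref{thm:Rrule} tells us the two agree. So, given $(\tod,k)\in\rcd$ and $v\in X_\alpha$, I unpack the definition to obtain an orientation $O$ of $G_\alpha$ satisfying the three conditions, and then extract a lower bound on the indicator set $\{u\in Y_\alpha \mid D_O^-(u)>0\}$ whose cardinality is bounded above by $k$.

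The key structural observation is the definition of $G_\alpha$: its edge set is $E[V_\alpha]\setminus E[X_\alpha]$, i.e.\ the edges that live entirely within $X_\alpha$ have been deleted. Consequently, every edge of $G_\alpha$ incident to $v\in X_\alpha$ has its other endpoint in $Y_\alpha=V_\alpha\setminus X_\alpha$. By property [1], $\tod(v)=D_O^+(v)$, so there are exactly $\tod(v)$ edges oriented out of $v$ under $O$, and each of them sinks at a vertex of $Y_\alpha$. Since $G$ is a simple graph, these $\tod(v)$ out-edges go to $\tod(v)$ \emph{distinct} neighbours of $v$ in $Y_\alpha$.

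Each of these distinct sinks $u\in Y_\alpha$ satisfies $D_O^-(u)\ge 1$, so they all lie in the set $\{u\in Y_\alpha \mid D_O^-(u)>0\}$. This gives
\[
\tod(v) \;\le\; \bigl|\{u\in Y_\alpha \mid D_O^-(u)>0\}\bigr| \;\le\; k,
\]
where the last inequality is property [3] of $\rcd$. This is exactly the desired claim $k\ge \tod(v)$.

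I do not anticipate a real obstacle here: the only subtlety is remembering that edges inside $X_\alpha$ are excluded from $G_\alpha$ (otherwise an out-edge of $v$ could go to another $X_\alpha$ vertex and not contribute to a ``used'' vertex in $Y_\alpha$). With that in hand, the bound is immediate, and the proof is a one-paragraph application of the definitions. Intuitively, the lemma says that once $v\in X_\alpha$ directs $\tod(v)$ of its incident edges into the already-processed part $Y_\alpha$, at least $\tod(v)$ distinct vertices of $Y_\alpha$ must be ``paid for'' in the partial solution, which is precisely what $k$ counts.
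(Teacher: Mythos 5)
Your proof is correct and follows essentially the same route as the paper's: extract the orientation $O$ guaranteed by the Expected Properties, observe that all $\tod(v)=D_O^+(v)$ out-edges of $v$ sink at distinct vertices of $Y_\alpha$ (since $E[X_\alpha]$ is excluded from $G_\alpha$), and bound that count by $|\{u\in Y_\alpha \mid D_O^-(u)>0\}|\le k$. You in fact spell out the one subtlety (edges within $X_\alpha$ being absent from $G_\alpha$) more explicitly than the paper does.
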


\begin{proof}
    Let $O$ be the orientation. Let $N^+(v)=\{u\in V(G) : (v,u)\in E(G)\}$ be out neighbors of $v$. By definition, we have $\tod(v)=|N^+(v)|\leq |\{u\in Y_\alpha \mid D_O^-(u)>0\}|\leq k$. 
\end{proof}

\begin{lemma}\label{extent}
    Fix some $(\tod,k)\in \rcd,v\in X_\alpha$ and some integer $p> 0$ satisfying $k+p\leq |Y_\alpha|$. Let $\tod_{m}:X_\alpha\rightarrow \mathbb{N}$ be a function such that $\tod_m(v)=\tod(v)+p$ and $\tod_m\setminus v= \tod\setminus v$.   We have $(\tod_m,|Y_\alpha|)\in \rcd$ if and only if $(\tod_m,k+p)\in \rcd$.
\end{lemma}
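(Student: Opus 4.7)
The direction $(\tod_m,k+p)\in\rcd\Rightarrow(\tod_m,|Y_\alpha|)\in\rcd$ is immediate from Lemma~\ref{monotone}: the hypothesis $k+p\le|Y_\alpha|$ lets the same witnessing orientation serve after weakening the active-set bound. For the nontrivial direction I would begin with two witnessing orientations, $O$ for $(\tod,k)\in\rcd$ and $O_m$ for $(\tod_m,|Y_\alpha|)\in\rcd$, and let $F\subseteq E(G_\alpha)$ be the set of edges on which $O$ and $O_m$ disagree, viewed as a directed graph by inheriting the orientation of $O$. Comparing $X_\alpha$-out-degrees under the two orientations on $F$-edges shows that every $u\in X_\alpha\setminus\{v\}$ is balanced in $F$, while $v$ has in-excess exactly $p$; all residual in-/out-excess therefore sits on vertices of $Y_\alpha$.

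My plan is to apply the standard Euler-type edge-disjoint decomposition of $F$ into directed closed walks and directed open walks from out-excess vertices to in-excess vertices. Exactly $p$ of the open walks end at $v$, and since the only source-type vertices in the decomposition lie in $Y_\alpha$, each of these walks necessarily starts at some $y_j\in Y_\alpha$. Let $P_1,\dots,P_p$ be these walks and construct $O'$ from $O$ by reversing precisely the edges of $P_1\cup\dots\cup P_p$. The routine verifications are then: (i) $v$'s out-degree grows by exactly $p$, giving $D^+_{O'}(v)=\tod_m(v)$; (ii) every other vertex of $X_\alpha$ can only appear as an internal vertex of some $P_i$, where the walk contributes one in-edge and one out-edge that get swapped, preserving its out-degree at $\tod(u)=\tod_m(u)$; and (iii) internal $Y_\alpha$-vertices of the $P_i$'s similarly see their local in- and out-contributions exchanged, so their in-degrees are unchanged.

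The main obstacle, and the step I would be most careful about, is bounding the in-degree growth at the $Y_\alpha$-sources $y_j$ so that capacity constraints and the active-set budget are respected simultaneously. Letting $r_j$ denote the number of walks among $P_1,\dots,P_p$ that start at $y_j$, the new in-degree of $y_j$ in $O'$ equals $D^-_O(y_j)+r_j$, and $r_j$ is bounded by the $F$-out-excess of $y_j$. The crucial identity here is that this $F$-out-excess equals $D^-_{O_m}(y_j)-D^-_O(y_j)$, because flipping all of $F$ turns $O$ into $O_m$; hence $D^-_{O'}(y_j)\le D^-_{O_m}(y_j)\le c(y_j)$, which is the required capacity bound. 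For the active-set count, a newly active vertex in $O'$ must be a source $y_j$ with $r_j\ge 1$, and since $\sum_j r_j=p$ there are at most $p$ such vertices; therefore the active set of $O'$ has size at most $k+p$, and $O'$ witnesses $(\tod_m,k+p)\in\rcd$.
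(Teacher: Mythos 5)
Your proof is correct, and it reaches the conclusion by a genuinely different route from the paper. The paper's proof of the nontrivial direction is iterative: it shows that $(\tod_m,|Y_\alpha|)\in \rcd$ together with $(\tod,k)\in\rcd$ yields $(\tod',k+1)\in\rcd$ with $\tod'(v)=\tod(v)+1$, and repeats this $p$ times. Each step builds a disagreement digraph on $Y_\alpha\cup\{v\}$ only, chooses the witness of $(\tod,k)$ to minimize the number of disagreeing edges so that this digraph is acyclic, follows a single path from $v$ to a sink $v'$, and reverses it; the sink condition gives $D^-_{O_1}(v')\le D^-_{O_2}(v')-1$, so the capacity at $v'$ and the active-set budget each absorb exactly one unit. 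You instead perform all $p$ augmentations simultaneously: you take the full disagreement set $F$ (including edges incident to $X_\alpha\setminus\{v\}$, which you show are balanced), extract from an Eulerian trail decomposition the $p$ edge-disjoint trails terminating at $v$, and reverse their union. Your capacity bound at each trail source $y_j$ comes from the identity (out-excess of $y_j$ in $F$) $=D^-_{O_m}(y_j)-D^-_O(y_j)$, which plays the role of the paper's sink condition but for several units at once. What your approach buys is that it needs neither the induction nor the minimality/acyclicity trick --- cycles in $F$ simply become closed trails that are never reversed --- and it makes the bookkeeping at the vertices of $X_\alpha\setminus\{v\}$ explicit, which the paper's restricted auxiliary graph leaves implicit. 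What the paper's approach buys is a shorter per-step argument and a one-reversal-at-a-time form that the authors reuse in the constructive remark after the proof of Theorem~\ref{correctness}; your construction is equally explicit, so it would serve that purpose as well. In short, yours is the flow-decomposition version of their augmenting-path argument, and both are sound.
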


\begin{proof}
    On one hand, the 'if' part is obvious by Lemma \ref{monotone}. On the other hand, we prove that  $(\tod_m,|Y_\alpha|)\in \rcd$ implies $({\tod}',k+1)\in \rcd$, where ${\tod}'(v)=\tod(v)+1,{\tod}'\setminus v=\tod\setminus v$. Then we can repeatedly increase the value of $k$ by $1$ for $p$ times to obtain the 'only if' part.
    Let the orientation corresponding to $(\tod,k)$ and $(\tod_m,|Y_\alpha|)$ be $O_1,O_2$ respectively. Now let $G'$ be a graph with vertex set $Y_\alpha\cup \{v\}$. A directed edge $(x,y)$ is in $G'$ if and only if $(x,y)\in O_2$ and $(y,x)\in O_1$. 
    
    By picking $O_1$ so that the number of edges in $G'$ is minimized, we can assume that $G'$ contains no cycle. Otherwise if $G'$ contains a cycle, we can reverse every edge along the cycle in $O_1$ so that it is still a valid orientation for $(\tod,k)$ but the number of edges in $G'$ decreases. 
    
    As $D_{O_2}^+(v)> D_{O_1}^+(v)$, there exists an non-empty path in $G'$ starting from $v$ ending at, say, $v'\neq v$ such that $v'$ has no out edge in $G'$. This implies $D_{O_1}^-(v')\leq D_{O_2}^-(v')-1$, or $v'$ will have an out edge in $G'$. We reverse the edges along this path in $O_1$. Let the new orientation be $O_3$. $D_{O_3}^-(v')\leq D_{O_1}^-(v')+1\leq D_{O_2}^-(v')\leq c(v)$. Moreover, $\{u \mid D_{O_3}^-(u)>0\}\setminus \{u \mid D_{O_1}^-(u)>0\}\subseteq \{v'\}$. Thus, $O_3$ is a valid orientation for $({\tod}',k+1)$.
\end{proof}

\subsection{Theorem~\ref{correctness} Proof Sketch}

Due to space limit, the complete proof is presented in Appendix~\ref{proofapprx}.

We use induction on nodes, following a bottom-up order on the tree decomposition. Leaf nodes satisfy property (A) and (B), because $\rcd=\rcdap$ for every leaf node. Fix a node $\alpha$ of height $h$, by induction, we assume that every node descendent to $\alpha$ satisfies (A) and (B). We only need to prove that $\alpha$ satisfies both (A) and (B).  
We make a case distinction based on the type of $\alpha$. The case where $\alpha$ is a forgetting node is the most complicated and requires lemma \ref{lb} and \ref{extent}. The other two types follow Lampis' framework.

To show $\alpha$ satisfies (A), we need to prove the existence of some $(\hat{\tod},\hat{k})\in \rcdap$ for any given $(\tod,k)\in \rcd$ such that $(\hat{\tod},\hat{k})$ and $(\tod,k)$ are $h$-close. This is done by first picking up the certificate of $(\tod,k)$, that is, the record $(\tod_1,k_1)\in \rcds{1}$ (or a pair of records in the case $\alpha$ is a join node, we omit join node case in the following sketch) which ``produces'' $(\tod,k)$ based on recursive rules for $\rcd$. Then by induction hypothesis, there is an $(h-1)$-close record $(\hat{\tod}_1,\hat{k}_1)$  in $\rcdaps{1}$. If $\alpha$ is not a forgetting node, then according to recursive rules for $\rcdap$, there exists $(\hat{\tod},\hat{k})\in \rcdap$. We prove that $(\hat{\tod},\hat{k})$ and $(\tod,k)$ are $h$-close. If $\alpha$ is a forgetting node, then we verify (1b) or (2b) by applying Lemma~\ref{monotone} on $(\tod_1,k_1)$.

To show $\alpha$ satisfies (B), if $\alpha$ is not a forgetting node, then we pick up and compare some records in a different order: We start from $(\hat{\tod},\hat{k})\in \rcdap$; Then we pick $(\hat{\tod}_1,\hat{k}_1)\in \rcdaps{1}$ according to recursive rules for $\rcdap$; Next we pick $(\tod_1,k_1)\in \rcds{1}$ based on induction hypothesis; Finally we find out $(\tod,k)\in \rcd$ using recursive rules for $\rcd$.
If $\alpha$ is a forgetting node, suppose the record $(\hat{\tod},\hat{k})\in \rcdap$ is produced by $(\hat{\tod}_1,\hat{k}_1)$. The main idea is to apply Lemma~\ref{extent} on $(\tod_t,|Y_{\alpha_1}|)$, the record verified by (1b) or (2b), and $(\tod_1,k_1)$, the record $(h-1)$-close to $(\hat{\tod}_1,\hat{k}_1)$, so as to show the existence of some $(\tod,k)\in \rcd$. At the same time we use Lemma~\ref{lb} to bound $k$.

\section{Approximation algorithms for TSS and VDS}\label{sec:AppTSSVDS}
In this section, we introduce the \emph{vertex subset problem} which is a generalization of many graph problems. Then we present a sufficient condition for the existence of parameterized approximation algorithms for such problems parameterized by the treewidth. Finally, we apply our algorithm to target set selection problem (TSS) and vector dominating set problem (VDS), which are both vertex subset problems satisfying this condition.  The definitions bellow are inspired by Fomin, et al. \cite{monotonelocalsearch}.

\begin{definition}[{Vertex Subset Problem}]
    A vertex subset problem $\Phi$ takes a string $I\in \{0,1\}^*$ as an input, which encodes a graph $G_I=(V_I,E_I)$ and some possible additional information, e.g. threshold values on vertices. $\Phi$ is identified by a function $\mathcal{F}_\Phi$ which maps a string $I\in \{0,1\}^*$ to a family of vertex subsets of $V_I$, say $\mathcal{F}_\Phi(I)\subseteq 2^{V_I}$. Any vertex set in $\mathcal{F}_\Phi(I)$ is a \textbf{solution} of the instance $I$. The goal is to find a minimum sized solution.
\end{definition}

We will often select a set of vertices and assume that it is included in a solution, and then consider the remaining sub-problem. So we introduce the concept of partial instances.
\begin{definition}[{Partial Vertex Subset Problem}]
    Let $\Phi$ be a vertex subset problem. The partial version of $\Phi$ takes a string $I\in \{0,1\}^*$ appended with a vertex subset $U\subseteq V_I$ as input. We call such a pair $(I,U)$ a partial instance of $\Phi$. Any vertex set $W\subseteq V_I\setminus U$ is a solution if and only if $W\cup U\in \mathcal{F}_\Phi(I)$. Still, the goal is to find a minimum sized solution.
\end{definition}
We consider the following conditions of a vertex subset problem $\Phi$.
\begin{itemize}
    \item $\Phi$ is \textbf{monotone}, if for any instance $I$, $S\in \mathcal{F}_\Phi(I)$ implies for all $S'$ satisfying $S\subseteq S'\subseteq V_I$, $S'\in \mathcal{F}_\Phi(I)$.
    \item $\Phi$ is \textbf{splittable}, if: for any instance $I$ and any separator $X$ of $G_I$ which separates $V_I\setminus X$ into disconnected parts $V_1,V_2,...,V_p$, if $S_1,S_2,...,S_p$ are vertex sets such that $S_i$ is a solution for the partial instance $(I,V_I\setminus V_i),\forall 1\leq i\leq p$, then $X\cup\bigcup_{1\leq i\leq p} S_i$ is a solution for $I$.
\end{itemize}


It is trivial to show the monotonicity for TSS and VDS. To see that they are splittable,
observe that given an instance $I=(G,t)$ of VDS for example, fix some $X\subseteq V(G)$, a set $S$ containing $X$ is a solution for $I$ if and only if $S\setminus X$ is a solution for $I'=(G',t')$, where $G'=G[V\setminus X]$ and $t'(v)=t(v)-|N(v)\cap X|$ for all $v\in V\setminus X$. If $X$ is a separator, then the graph $G'$ is not connected, and the union of any solutions of each component in $G'$, with $X$ together forms a solution of $I$. This observation also works for TSS. 

The main theorem in this section is to show the tractability, in the sense of parameterized approximation, of monotone and splittable vertex subset problems on graphs with bounded treewidth. 

\begin{theorem}
    \label{main2}
    Let $\Phi$ be a vertex selection problem which is monotone and splittable. 
    If there exists an algorithm such that on input a partial instance of $\Phi$ appended with a corresponding nice tree decomposition with width $w$, it can run in time $f(\ell,w,n)$ and
    \begin{itemize}
        \item either output the optimal solution, if the size of it is at most $\ell$;
        \item or confirm that the optimal solution size is at least $\ell+1$
    \end{itemize} 
    then there exists an approximate algorithm for $\Phi$ with ratio $1+(w+1)/(l+1)$ and runs in time $f(l,w,n)\cdot n^{O(1)}$, for all $l\in \mathbb{N}$.
\end{theorem}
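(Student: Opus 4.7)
My plan is to design an iterative ``peel-off'' algorithm that repeatedly finds a subtree whose sub-problem is just barely too large to be solved by $\mathcal{A}$ directly, commits its bag together with optimal completions of the children, and moves on. For a node $\alpha\in V(T)$ and a committed set $U\subseteq V$, call the \emph{sub-problem at $\alpha$ relative to $U$} the partial instance $(I,V\setminus(Y_\alpha\setminus U))$; this is well defined because $X_\alpha$ separates $Y_\alpha$ from $V\setminus V_\alpha$ in $G$. The algorithm maintains a solution $S$ and a committed set $U$, both initially empty, and in each round first queries $\mathcal{A}$ on the residual $(I,U)$: if its optimum is $\leq \ell$, it adjoins the returned $S^\star$ to $S$ and halts. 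Otherwise it walks down from the root, invoking $\mathcal{A}$ on each visited node's sub-problem relative to $U$ and descending into any child whose sub-problem optimum is still $>\ell$; since an introduce node shares its $Y_\alpha$ with its unique child and leaves have $Y=\emptyset$, the walk stops at a \emph{critical node} $\alpha$ (a join or forget node) whose own sub-problem has optimum $>\ell$ while each child's has optimum $\leq\ell$. At $\alpha$ it calls $\mathcal{A}$ on each child's sub-problem to obtain optimal completions $S_1$ (and $S_2$, if $\alpha$ is a join node) of sizes at most $\ell$, sets $S\leftarrow S\cup X_\alpha\cup S_1\cup S_2$ and $U\leftarrow U\cup V_\alpha$, and iterates.

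Correctness follows from iterated application of splittability: at each critical node the separator $X_{\alpha^{(i)}}$ splits $Y_{\alpha^{(i)}}\setminus U^{(i)}$ off from the rest of the uncommitted graph, and $S_1^{(i)}\cup S_2^{(i)}$ solves each component of this piece (by optimality on the children and monotonicity), so splittability certifies that the union of all commits together with the final $S^\star$ lies in $\mathcal{F}_\Phi(I)$. For the ratio, fix an optimum $\mathrm{OPT}$ of $I$, and for each round $i$ set $Y_i^\star:=Y_{\alpha^{(i)}}\setminus U^{(i)}$; also let $Y_\infty^\star:=V\setminus U_{\mathrm{final}}$. A one-line monotonicity argument shows that $\mathrm{OPT}\cap Y_i^\star$ is a feasible completion of the $i$th sub-problem (because $(\mathrm{OPT}\cap Y_i^\star)\cup(V\setminus Y_i^\star)\supseteq \mathrm{OPT}$), so $|\mathrm{OPT}\cap Y_i^\star|\geq \ell+1$; the same argument gives $|S_1^{(i)}|+|S_2^{(i)}|\leq |\mathrm{OPT}\cap Y_i^\star|$ (using $Y_{\alpha_1}\cup Y_{\alpha_2}=Y_\alpha$ and $Y_{\alpha_1}\cap Y_{\alpha_2}=\emptyset$ at a join node and $Y_{\alpha_1}\subseteq Y_\alpha$ at a forget node) and $|S^\star|\leq |\mathrm{OPT}\cap Y_\infty^\star|$. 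Since $Y_{\alpha^{(i)}}\subseteq U^{(i+1)}$ after round $i$, the sets $Y_1^\star,\ldots,Y_k^\star,Y_\infty^\star$ are pairwise disjoint, so summing yields $k(\ell+1)\leq |\mathrm{OPT}|$ on the number $k$ of rounds and $|S|\leq k(w+1)+|\mathrm{OPT}|$ on the output size, which combine to the promised $1+(w+1)/(\ell+1)$.

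Each round makes $O(n)$ calls to $\mathcal{A}$, all on partial instances of the underlying graph $G$ with the same tree decomposition $(T,\mathcal{X})$ of width $w$; since every round strictly shrinks $V\setminus U$ there are at most $n$ rounds, giving total running time $f(\ell,w,n)\cdot n^{O(1)}$. The step I expect to require the most care is the bookkeeping for the ``relative to $U$'' sub-problems — specifically verifying that the $Y_i^\star$'s are genuinely pairwise disjoint inside $\mathrm{OPT}$ and that the bound $|S_1^{(i)}|+|S_2^{(i)}|\leq |\mathrm{OPT}\cap Y_i^\star|$ survives when the children's sub-problems are evaluated under the current residual rather than the original instance. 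Once the identifications $Y_{\alpha_1}\cup Y_{\alpha_2}=Y_\alpha$, $Y_{\alpha_1}\subseteq Y_\alpha$, and $Y_{\alpha^{(i)}}\subseteq U^{(i+1)}$ are in place, both the ratio and the runtime fall out by direct counting.
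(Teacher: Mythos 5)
Your overall strategy is the same as the paper's: repeatedly locate a node whose sub-problem is not solvable within size $\ell$ but all of whose children's sub-problems are, commit a bag plus the children's optimal completions, charge the $\le w+1$ committed bag vertices against the $\ge \ell+1$ optimum vertices that must lie in the peeled-off region, and recurse on the residual. Your counting ($|\mathrm{OPT}\cap Y_i^\star|\ge \ell+1$, $|S_1^{(i)}|+|S_2^{(i)}|\le|\mathrm{OPT}\cap Y_i^\star|$, disjointness of the $Y_i^\star$) matches the paper's Lemma on splittability consequences and its ratio computation.

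There is, however, one concrete step that fails: at a critical \emph{forget} node you commit $X_\alpha\cup S_1$ and set $U\leftarrow U\cup V_\alpha$, but the forgotten vertex $v$ satisfies $v\in X_{\alpha_1}\setminus X_\alpha$ and $v\notin Y_{\alpha_1}$, so $v$ is neither in the committed separator $X_\alpha$ nor eligible to appear in $S_1$ (which is a subset of $Y_{\alpha_1}\setminus U$). The child's sub-problem $(I,V\setminus(Y_{\alpha_1}\setminus U))$ treats $v$ as already included in the solution, so $S_1\cup(V\setminus(Y_{\alpha_1}\setminus U))\in\mathcal{F}_\Phi(I)$ tells you nothing about $S_1\cup(V\setminus(Y_{\alpha}\setminus U))$ --- monotonicity runs in the wrong direction here, since $Y_{\alpha_1}\subsetneq Y_\alpha$. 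Consequently your splittability application needs the separator $U\cup X_{\alpha_1}$, not $U\cup X_\alpha$, and as written the returned set can fail to be a solution (e.g.\ in VDS when $v$'s threshold can only be met by putting $v$ itself into the solution, the child's optimum may be $0$ while $v$ is never selected). The fix is exactly what the paper's algorithm does: commit the union of the \emph{children's} bags $\bigcup_{\alpha_c}X_{\alpha_c}$ rather than $X_\alpha$; in a nice tree decomposition this set still has size at most $w+1$ (join children share one bag, a forget child's bag has size $|X_\alpha|+1\le w+1$), so your ratio and runtime analysis survive unchanged. The remaining differences from the paper --- walking down the tree instead of picking a minimum-height non-good node, and keeping one fixed tree decomposition with the bookkeeping set $U$ instead of recomputing a decomposition of the residual graph --- are cosmetic and sound, modulo also handling the degenerate case where the instance has no solution at all.
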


We provide a trivial algorithm for the partial version of TSS. Given a partial instance $(I=(G,t),U)$, we search for a solution of size at most $\ell$ by brute-force. This takes time $f(\ell,w,n)=n^{\ell+O(1)}$. Setting $l:=C$ in Theorem \ref{main2}, we simply get the following.
\begin{corollary}\label{TSS}
    (Restated version of Theorem \ref{thm:tssapp})
    For all constant $C$, TSS admits a $1+(w+1)/(C+1)$-approximation algorithm running in time $n^{C+O(1)}$.
\end{corollary}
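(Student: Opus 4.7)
The plan is to derive Corollary~\ref{TSS} directly from Theorem~\ref{main2} by supplying the three ingredients the meta-theorem needs: TSS is a vertex subset problem that is monotone, TSS is splittable, and the partial version of TSS admits a brute-force $n^{\ell+O(1)}$-time algorithm that either returns an optimal solution of size at most $\ell$ or certifies that the optimum exceeds $\ell$.

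Monotonicity is immediate: if a seed set $S$ activates all of $V(G)$, then for any $S'\supseteq S$ the activation process from $S'$ dominates, round by round, the one from $S$, so $S'$ also activates $V(G)$. For splittability, I will verify the hypothesis in the definition. Fix a separator $X$ of $G_I$ with components $V_1,\dots,V_p$ of $G_I[V_I\setminus X]$, and suppose $S_i$ is a solution for the partial instance $(I,V_I\setminus V_i)$, i.e.\ $S_i\cup(V_I\setminus V_i)$ activates $V_I$. Let $T=X\cup\bigcup_i S_i$. Since $X\subseteq V_I\setminus V_i$ for every $i$, the set $X$ is included in every partial-instance seed; likewise $X\subseteq T$. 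The crucial point is that because there are no edges between distinct components $V_i,V_j$, the activation state of a vertex $v\in V_i$ at any round depends only on the activation state of $V_i\cup X$ at the previous round. Starting from $T$, the restriction to $V_i\cup X$ in round $0$ is $X\cup S_i$; starting from $S_i\cup(V_I\setminus V_i)$, the restriction to $V_i\cup X$ in round $0$ is also $X\cup S_i$. By induction on the round number the two processes agree on $V_i\cup X$ forever. Since the partial-instance process activates all of $V_i$, so does the process from $T$; repeating over $i$ shows $T$ activates $V_I$, so $T\in\mathcal{F}_{\mathrm{TSS}}(I)$.

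For the subroutine, given a partial instance $(I,U)$ of TSS and an integer $\ell$, I enumerate every $W\subseteq V_I\setminus U$ with $|W|\le\ell$, run the polynomial-time activation simulation described in the Preliminaries on the seed $W\cup U$, and keep the smallest $W$ that activates $V_I$; if none exists the optimum of $(I,U)$ is strictly greater than $\ell$. The number of candidates is $\sum_{k=0}^{\ell}\binom{n}{k}=n^{O(\ell)}$ and each simulation is polynomial, so the total running time is $f(\ell,w,n)=n^{\ell+O(1)}$, independent of $w$ and of the tree decomposition (which is not needed by this brute force).

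Plugging everything into Theorem~\ref{main2} with the parameter choice $\ell:=C$ produces an approximation algorithm for TSS with ratio $1+(w+1)/(C+1)$ and running time $f(C,w,n)\cdot n^{O(1)}=n^{C+O(1)}$, which is exactly the statement of the corollary. The main obstacle is the splittability argument: one has to be careful that the diffusion processes in the global instance and in each partial instance proceed in lockstep on each component, and this is precisely where including the separator $X$ in the constructed solution (together with the fact that $X$ is already part of every partial-instance seed) makes the inductive synchronisation work.
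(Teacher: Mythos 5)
Your proposal is correct and follows essentially the same route as the paper: verify monotonicity, verify splittability of TSS (the paper does this via the equivalent threshold-reduction reformulation $t'(v)=t(v)-|N(v)\cap X|$ on $G[V\setminus X]$, you do it by a round-by-round synchronisation argument, but both rest on the same observation that once the separator $X$ is active each component evolves independently), supply the brute-force $n^{\ell+O(1)}$ subroutine for the partial version, and invoke Theorem~\ref{main2} with $\ell:=C$. No gaps; your splittability argument is merely a more detailed write-up of what the paper states in one sentence.
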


As mentioned before, Raman et al.\cite{vdsw1fpt} showed that VDS is $W[1]$-hard parameterized by $w$, but FPT with respect to the combined parameter $(k+w)$ where $k$ is the solution size. The running time of their algorithm is $k^{O(wk^2)}n^{O(1)}$. A partial instance $(I,U)$ of VDS can be transformed to an  equivalent VDS instance, in which the input graph is $G[V_I\setminus U_I]$, so this algorithm can also be used for the partial version of VDS.
Set $l:=w^2(\log \log n/\log \log \log n)^{0.5}$ in Theorem \ref{main2}, we get Corollary \ref{VDS}.

\begin{corollary}\label{VDS}
    (Restated version of Theorem \ref{thm:vdsapp})
    Vector Dominating Set admits a $1+1/(w\log\log n)^{\Omega(1)}$-approximation algorithm running in time $2^{O(w^5\log w \log\log n)}n^{O(1)}$.
\end{corollary}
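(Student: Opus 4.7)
The plan is to apply Theorem~\ref{main2} directly, so the only real work is to supply the required partial-instance subroutine and to tune the parameter $\ell$. The monotonicity of VDS is immediate (adding more vertices to a vector dominating set keeps it a vector dominating set), and the splittability was already observed in the text: a partial instance $(I,U)$ with $I=(G,t)$ is equivalent to the standard VDS instance on $G[V_I\setminus U]$ with thresholds $t'(v)=t(v)-|N(v)\cap U|$, and this reduction plays well with separators, so pieces glued across any separator remain feasible. Hence VDS meets the hypotheses of Theorem~\ref{main2}, and the proof reduces to producing an algorithm which, on input a partial instance, a nice tree decomposition of width $w$, and a budget $\ell$, either returns an optimal solution of size at most $\ell$ or certifies that the optimum exceeds $\ell$.

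For this subroutine I would invoke the algorithm of Raman et al.~\cite{vdsw1fpt}, which solves VDS in time $k^{O(wk^2)} n^{O(1)}$ in the combined parameter $(k+w)$. Applied to the equivalent instance on $G[V_I\setminus U]$ obtained from the reduction above and run with budget $\ell$, it gives exactly what Theorem~\ref{main2} asks for, with $f(\ell,w,n)=\ell^{O(w\ell^2)}n^{O(1)}$. (Note that the tree decomposition of $G[V_I\setminus U]$ of width at most $w$ is obtained from the input decomposition by deleting $U$ from every bag.)

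Now I would instantiate Theorem~\ref{main2} with $\ell := w^{2}\bigl(\log\log n/\log\log\log n\bigr)^{1/2}$. The approximation ratio becomes $1+(w+1)/(\ell+1) = 1+O\!\bigl(1/\bigl(w(\log\log n/\log\log\log n)^{1/2}\bigr)\bigr)$; since $\log\log\log n$ is $(\log\log n)^{o(1)}$, this fits the claimed form $1+1/(w\log\log n)^{\Omega(1)}$. For the running time, the subroutine contributes an exponent of $w\ell^{2}\log\ell$, where $w\ell^{2} = w^{5}\log\log n/\log\log\log n$ and $\log\ell = O(\log w+\log\log\log n)$. Multiplying, $w\ell^{2}\log\ell = O\!\bigl(w^{5}\log w\log\log n/\log\log\log n\bigr) + O\!\bigl(w^{5}\log\log n\bigr) = O\!\bigl(w^{5}\log w\log\log n\bigr)$, and the total running time $f(\ell,w,n)\cdot n^{O(1)}$ becomes $2^{O(w^{5}\log w\log\log n)}n^{O(1)}$, matching the statement.

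The only genuine subtlety is the arithmetic in this last step: the choice of $\ell$ deliberately inserts a $1/\log\log\log n$ factor into $w\ell^{2}$ in order to absorb the $\log\log\log n$ term appearing in $\log\ell$, leaving only $\log w$ to multiply $\log\log n$ in the final exponent. Verifying this cancellation while simultaneously keeping $\ell$ large enough to drive the ratio down to $1+1/(w\log\log n)^{\Omega(1)}$ is the main balancing act; once the parameters are set this way, both conclusions of the corollary follow from Theorem~\ref{main2} without further work.
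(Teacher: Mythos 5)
Your proposal is correct and follows essentially the same route as the paper: verify monotonicity and splittability, reduce partial instances to ordinary VDS instances on induced subgraphs so that the $k^{O(wk^2)}n^{O(1)}$ algorithm of Raman et al.\ serves as the subroutine required by Theorem~\ref{main2}, and set $\ell = w^2(\log\log n/\log\log\log n)^{1/2}$, which is exactly the paper's choice. The parameter arithmetic you carry out matches the paper's (the paper states it more tersely), so nothing further is needed.
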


Notice that we can't obtain a $(1+o(1))$-approximation for TSS using a similar approach, because solving TSS in $f(w+k)n^{O(1)}$-time is $W[1]$-hard~\cite{tssw1}.

One may also think of applying Theorem~\ref{main2} to CVC, since CVC is FPT when parameterized by solution size~\cite{cdscvc08}. However, CVC is not splittable. Think of a simple $3$-vertex graph with vertex set $\{a,b,c\}$ and edge set $\{\{a,b\},\{b,c\}\}$. The capacities are: $c(a)=0,c(b)=1,c(c)=0$. $\{b\}$ is a separator in this graph and empty sets are two solutions for the two partial instances. However, $\{b\}$ cannot cover both two edges in the original graph.


\subsection{The Algorithm Framework}
To prove Theorem \ref{main2}, we introduce the concept of $l$-good node.
\begin{definition}[{$l$-good Node}] Let $I$ be an instance of a vertex selection problem $\Phi$ and $(T,\mathcal{X})$ be a nice tree decomposition of any subgraph of $G_I$. A node $\alpha\in V(T)$ is an $l$-good node if the partial instance $(I,V_I\setminus Y_\alpha)$ admits a solution of size at most $l$. 
\end{definition}
For a node $\alpha$, let $N^-_\alpha$ denote the set of all children of $\alpha$. We present the pseudocode of our algorithm in Algorithm~\ref{subprocess}. Figure~\ref{fig:demAlg} in Appendix~\ref{proofTSSVDS} illustrates how the sets defined in Algorithm~\ref{subprocess} are related. Algorithm~\ref{subprocess} solves the partial version of $\Phi$. For the original version, when we get an instance $I$, we just create an equivalent partial instance $(I,\emptyset)$ appended with a nice tree decomposition $(T,\mathcal{X})$ and an integer $l$, then we run $Solve((I,\emptyset),(T,\mathcal{X}),l)$. The analyze of Algorithm~\ref{subprocess} is presented in Appendix~\ref{proofTSSVDS}. 


\noindent\textbf{Main idea of Algorithm~\ref{subprocess}:} Let $Alg$ be an algorithm solving partial instances in time $f(l,w,n)$. Given a partial instance $(I,D)$ and a nice tree decomposition $(T,\mathcal{X})$ on $G[I\setminus D]$, we run $Alg$ to test the goodness of each node. If the root node is $l$-good, then $(I,D)$ has a solution with size at most $l$, we use $Alg$ to find the optimal solution. If a leaf node is not $l$-good then by monotonicity $I$ has no solution\footnote{By our definition of vertex subset problem, the set of solutions can be empty. However any instance of TSS or VDS admits at least one solution which is the whole vertex set.}.
Otherwise, we can pick a lowest node $\alpha$ which is not $l$-good. Then all its children are $l$-good. Such a node has nice properties.
 \begin{itemize}
     \item  On one hand, since all children of $\alpha$ are $l$-good, the partial instances $(I,V_I\setminus Y_{\alpha_c})$ can be optimally solved by $Alg$ for each $\alpha_c$ a child of $\alpha$.  Adding $X_{\alpha_c}$ and the optimal solution $E_{\alpha_c}$ for $(I,V_I\setminus Y_{\alpha_c})$ into the solution enables us to ``discard'' the whole subtree rooted by $\alpha_c$ and the corresponding vertices, i.e. $V_{\alpha_c}$; 
     \item On the other hand, as $\alpha$ is not $l$-good, by the splittable and monotone properties, we can deduce that the optimal solution $S^*$ has an intersection of size at least $(l+1)$ with $Y_\alpha$ i.e. $|S^*\cap Y_\alpha|\geq l+1$. 
\end{itemize}
Based on these properties, the algorithm iteratively finds one such node $\alpha$ and includes $X_{\alpha_c}\cup E_{\alpha_c}$ for its every child $\alpha_c$ into the solution, then ``removes'' $V_{\alpha_c}$ from the graph.
Once we repeat this procedure, the optimal solution size decreases by at least $|S^*\cap (\bigcup_{\alpha_c}V_{\alpha_c})|\geq |S^*\cap Y_\alpha|\geq l+1$. For each $\alpha_c$, we use $Alg$ to find the optimal solution $E_{\alpha_c}$, so in each $Y_{\alpha_c}$ we select at most $|S^*\cap Y_{\alpha_c}|$ vertices. The ``non-optimal'' part is $\bigcup_{\alpha_c} X_{\alpha_c}$, which is at most $O(w)=O(w/l)|S^*\cap(\bigcup_{\alpha_c}V_{\alpha_c})|$.
Therefore, the approximation ratio is upper bounded by $1+\frac{|\bigcup_{\alpha_c}X_{\alpha_c}|}{l+1}\le 1+O(w/l)$.



\begin{algorithm}[ht]
    \caption{Subprocess $Solve()$}
    \label{subprocess}
    \LinesNumbered
    \KwIn{A partial instance $(I,D)$ of $\Phi$, a nice tree decomposition $(T,\mathcal{X})$ of $G_I[V_I\setminus D]$ with width $w$, $l\in \mathbb{N}$ an integer.}
    \KwOut{A solution $S$ to $(I,D)$, or 'there exists no solution'.}
    \BlankLine
    \For{each node $\alpha$}{
        Use $Alg$ to test if $\alpha$ is an $l$-good node;\\
        \If{$\alpha$ is $l$-good}{
            $E_\alpha:=$ the minimum solution for $(I,V_I\setminus Y_\alpha)$;
        }
    }
    \If{the root $\alpha_0$ is $l$-good}{
        Return $E_{\alpha_0}$;
    }

    Find a node $\alpha$ which is not $l$-good with minimum height;\\
    \If{$\alpha$ is a leaf node}{
        Return 'there exists no solution';
    }
    $E':=\emptyset$;\\
    $F:=\emptyset$;\\
    \For{each $\alpha_c\in N^-_{\alpha}$}{
        $E':=E'\cup E_{\alpha_c}\cup X_{\alpha_c}$;\\
        $F:=F\cup V_{\alpha_c}$;\\
    }
    Find a nice tree decomposition $(T',\mathcal{X}')$ for $G_I[V_I\setminus (D\cup F)]$;\\
    Return $E'\cup Solve((I,D\cup F),(T',\mathcal{X}'),l)$;
\end{algorithm}




\bibliography{lipics-v2021-sample-article}

\begin{thebibliography}{10}

\bibitem{tsswp}
Karl~R. Abrahamson, Rodney~G. Downey, and Michael~R. Fellows.
\newblock Fixed-parameter tractability and completeness {IV:} on completeness for {W[P]} and {PSPACE} analogues.
\newblock {\em Ann. Pure Appl. Log.}, 73(3):235--276, 1995.
\newblock \href {https://doi.org/10.1016/0168-0072(94)00034-Z} {\path{doi:10.1016/0168-0072(94)00034-Z}}.

\bibitem{arnborg1991easy}
Stefan Arnborg, Jens Lagergren, and Detlef Seese.
\newblock Easy problems for tree-decomposable graphs.
\newblock {\em Journal of Algorithms}, 12(2):308--340, 1991.

\bibitem{tssw1}
Oren Ben-Zwi, Danny Hermelin, Daniel Lokshtanov, and Ilan Newman.
\newblock Treewidth governs the complexity of target set selection.
\newblock {\em Discrete Optimization}, 8(1):87--96, 2011.

\bibitem{bddw1fpt}
Nadja Betzler, Robert Bredereck, Rolf Niedermeier, and Johannes Uhlmann.
\newblock On bounded-degree vertex deletion parameterized by treewidth.
\newblock {\em Discrete Applied Mathematics}, 160(1-2):53--60, 2012.

\bibitem{bodlaender1993tw}
Hans~L Bodlaender.
\newblock A linear time algorithm for finding tree-decompositions of small treewidth.
\newblock In {\em Proceedings of the twenty-fifth annual ACM symposium on Theory of computing}, pages 226--234, 1993.

\bibitem{bodlaender1994tourist}
Hans~L Bodlaender.
\newblock A tourist guide through treewidth.
\newblock {\em Acta cybernetica}, 11(1-2):1, 1994.

\bibitem{BodlaenderH98}
Hans~L. Bodlaender and Torben Hagerup.
\newblock Parallel algorithms with optimal speedup for bounded treewidth.
\newblock {\em {SIAM} J. Comput.}, 27(6):1725--1746, 1998.
\newblock \href {https://doi.org/10.1137/S0097539795289859} {\path{doi:10.1137/S0097539795289859}}.

\bibitem{CharikarNW16}
Moses Charikar, Yonatan Naamad, and Anthony Wirth.
\newblock On approximating target set selection.
\newblock In Klaus Jansen, Claire Mathieu, Jos{\'{e}} D.~P. Rolim, and Chris Umans, editors, {\em Approximation, Randomization, and Combinatorial Optimization. Algorithms and Techniques, {APPROX/RANDOM} 2016, September 7-9, 2016, Paris, France}, volume~60 of {\em LIPIcs}, pages 4:1--4:16. Schloss Dagstuhl - Leibniz-Zentrum f{\"{u}}r Informatik, 2016.
\newblock \href {https://doi.org/10.4230/LIPIcs.APPROX-RANDOM.2016.4} {\path{doi:10.4230/LIPIcs.APPROX-RANDOM.2016.4}}.

\bibitem{charikar2016approximating}
Moses Charikar, Yonatan Naamad, and Anthony Wirth.
\newblock On approximating target set selection.
\newblock {\em Approximation, Randomization, and Combinatorial Optimization. Algorithms and Techniques (APPROX/RANDOM 2016)}, 2016.

\bibitem{chen2009approximability}
Ning Chen.
\newblock On the approximability of influence in social networks.
\newblock {\em SIAM Journal on Discrete Mathematics}, 23(3):1400--1415, 2009.

\bibitem{cheung2014improved}
Wang~Chi Cheung, Michel~X Goemans, and Sam Chiu-wai Wong.
\newblock Improved algorithms for vertex cover with hard capacities on multigraphs and hypergraphs.
\newblock In {\em Proceedings of the Twenty-Fifth Annual ACM-SIAM Symposium on Discrete Algorithms}, pages 1714--1726. SIAM, 2014.

\bibitem{chuzhoy2006covering}
Julia Chuzhoy and Joseph Naor.
\newblock Covering problems with hard capacities.
\newblock {\em SIAM Journal on Computing}, 36(2):498--515, 2006.

\bibitem{vds13}
Ferdinando Cicalese, Martin Milani{\v{c}}, and Ugo Vaccaro.
\newblock On the approximability and exact algorithms for vector domination and related problems in graphs.
\newblock {\em Discrete Applied Mathematics}, 161(6):750--767, 2013.

\bibitem{courcelle1990graph}
Bruno Courcelle.
\newblock Graph rewriting: An algebraic and logic approach.
\newblock In {\em Formal Models and Semantics}, pages 193--242. Elsevier, 1990.

\bibitem{cygan2015parameterized}
Marek Cygan, Fedor~V Fomin, {\L}ukasz Kowalik, Daniel Lokshtanov, D{\'a}niel Marx, Marcin Pilipczuk, Micha{\l} Pilipczuk, and Saket Saurabh.
\newblock {\em Parameterized algorithms}, volume~5.
\newblock Springer, 2015.

\bibitem{Din16}
Irit Dinur.
\newblock Mildly exponential reduction from gap 3sat to polynomial-gap label-cover.
\newblock {\em Electron. Colloquium Comput. Complex.}, page 128, 2016.
\newblock URL: \url{https://eccc.weizmann.ac.il/report/2016/128}.

\bibitem{cdscvc08}
Michael Dom, Daniel Lokshtanov, Saket Saurabh, and Yngve Villanger.
\newblock Capacitated domination and covering: A parameterized perspective.
\newblock In {\em International Workshop on Parameterized and Exact Computation}, pages 78--90. Springer, 2008.

\bibitem{downey2013fundamentals}
Rodney~G Downey and Michael~R Fellows.
\newblock {\em Fundamentals of parameterized complexity}, volume~4.
\newblock Springer, 2013.

\bibitem{feldmann2020survey}
Andreas~Emil Feldmann, Euiwoong Lee, and Pasin Manurangsi.
\newblock A survey on approximation in parameterized complexity: Hardness and algorithms.
\newblock {\em Algorithms}, 13(6):146, 2020.

\bibitem{flugro06}
J{\"o}rg Flum and Martin Grohe.
\newblock {\em Parameterized Complexity Theory}.
\newblock Springer, 2006.

\bibitem{monotonelocalsearch}
Fedor~V Fomin, Serge Gaspers, Daniel Lokshtanov, and Saket Saurabh.
\newblock Exact algorithms via monotone local search.
\newblock {\em Journal of the ACM (JACM)}, 66(2):1--23, 2019.

\bibitem{gandhi2006improved}
Rajiv Gandhi, Eran Halperin, Samir Khuller, Guy Kortsarz, and Aravind Srinivasan.
\newblock An improved approximation algorithm for vertex cover with hard capacities.
\newblock {\em Journal of Computer and System Sciences}, 72(1):16--33, 2006.

\bibitem{Grohe08}
Martin Grohe.
\newblock Logic, graphs, and algorithms.
\newblock In J{\"{o}}rg Flum, Erich Gr{\"{a}}del, and Thomas Wilke, editors, {\em Logic and Automata: History and Perspectives [in Honor of Wolfgang Thomas]}, volume~2 of {\em Texts in Logic and Games}, pages 357--422. Amsterdam University Press, 2008.

\bibitem{guha2002capacitated}
Sudipto Guha, Refael Hassin, Samir Khuller, and Einat Or.
\newblock Capacitated vertex covering with applications.
\newblock In {\em Symposium on Discrete Algorithms: Proceedings of the thirteenth annual ACM-SIAM symposium on Discrete algorithms}, volume~6, pages 858--865. Citeseer, 2002.

\bibitem{guo2005parameterized}
Jiong Guo, Rolf Niedermeier, and Sebastian Wernicke.
\newblock Parameterized complexity of generalized vertex cover problems.
\newblock In {\em Workshop on Algorithms and Data Structures}, pages 36--48. Springer, 2005.

\bibitem{kempe2003maximizing}
David Kempe, Jon Kleinberg, and {\'E}va Tardos.
\newblock Maximizing the spread of influence through a social network.
\newblock In {\em Proceedings of the ninth ACM SIGKDD international conference on Knowledge discovery and data mining}, pages 137--146, 2003.

\bibitem{khot2008vertex}
Subhash Khot and Oded Regev.
\newblock Vertex cover might be hard to approximate to within 2- $\varepsilon$.
\newblock {\em Journal of Computer and System Sciences}, 74(3):335--349, 2008.

\bibitem{korhonen2022}
Tuukka Korhonen.
\newblock A single-exponential time 2-approximation algorithm for treewidth.
\newblock In {\em 2021 IEEE 62nd Annual Symposium on Foundations of Computer Science (FOCS)}, pages 184--192. IEEE, 2022.

\bibitem{lampis}
Michael Lampis.
\newblock Parameterized approximation schemes using graph widths.
\newblock In {\em International Colloquium on Automata, Languages, and Programming}, pages 775--786. Springer, 2014.

\bibitem{LRSZ20}
Daniel Lokshtanov, M.~S. Ramanujan, Saket Saurabh, and Meirav Zehavi.
\newblock Parameterized complexity and approximability of directed odd cycle transversal.
\newblock In Shuchi Chawla, editor, {\em Proceedings of the 2020 {ACM-SIAM} Symposium on Discrete Algorithms, {SODA} 2020, Salt Lake City, UT, USA, January 5-8, 2020}, pages 2181--2200. {SIAM}, 2020.

\bibitem{MR16}
Pasin Manurangsi and Prasad Raghavendra.
\newblock {A Birthday Repetition Theorem and Complexity of Approximating Dense CSPs}.
\newblock In Ioannis Chatzigiannakis, Piotr Indyk, Fabian Kuhn, and Anca Muscholl, editors, {\em 44th International Colloquium on Automata, Languages, and Programming (ICALP 2017)}, volume~80 of {\em Leibniz International Proceedings in Informatics (LIPIcs)}, pages 78:1--78:15, Dagstuhl, Germany, 2017. Schloss Dagstuhl--Leibniz-Zentrum fuer Informatik.
\newblock URL: \url{http://drops.dagstuhl.de/opus/volltexte/2017/7463}, \href {https://doi.org/10.4230/LIPIcs.ICALP.2017.78} {\path{doi:10.4230/LIPIcs.ICALP.2017.78}}.

\bibitem{vdsw1fpt}
Venkatesh Raman, Saket Saurabh, and Sriganesh Srihari.
\newblock Parameterized algorithms for generalized domination.
\newblock In {\em International Conference on Combinatorial Optimization and Applications}, pages 116--126. Springer, 2008.

\bibitem{robertson1986graph}
Neil Robertson and Paul~D. Seymour.
\newblock Graph minors. ii. algorithmic aspects of tree-width.
\newblock {\em Journal of algorithms}, 7(3):309--322, 1986.

\bibitem{saha2012set}
Barna Saha and Samir Khuller.
\newblock Set cover revisited: Hypergraph cover with hard capacities.
\newblock In {\em International Colloquium on Automata, Languages, and Programming}, pages 762--773. Springer, 2012.

\bibitem{shiau2017tight}
Jia-Yau Shiau, Mong-Jen Kao, Ching-Chi Lin, and DT~Lee.
\newblock Tight approximation for partial vertex cover with hard capacities.
\newblock In {\em 28th International Symposium on Algorithms and Computation (ISAAC 2017)}. Schloss Dagstuhl-Leibniz-Zentrum fuer Informatik, 2017.

\bibitem{wong2017tight}
Sam Chiu-wai Wong.
\newblock Tight algorithms for vertex cover with hard capacities on multigraphs and hypergraphs.
\newblock In {\em Proceedings of the Twenty-Eighth Annual ACM-SIAM Symposium on Discrete Algorithms}, pages 2626--2637. SIAM, 2017.

\end{thebibliography}

\appendix

\section{Proof Sketch of Theorem \ref{thm:Rrule}}
\label{proofexact}

It is easy to see that $|\rcd|\leq n^{O(w)}$ for all $\alpha$. And one execution of a recursive rule takes time at most polynomial of the size of some $\rcd$. Thus the total running time is $n^{O(w)}\cdot O(w^2\log n)=n^{O(w)}$.

To prove the correctness, we need to show the record sets computed by the recursive rules satisfy the expected properties. We use induction. Leaf nodes are trivial to verify. Fix a node $\alpha$, assume that for every node descendent to it, the corresponding record set is correctly computed. The proof then contains the 'if' part and the 'only if' part. For the 'if' part we have some $O$, $(\tod,k)$ satisfying the expected properties and aim to prove $(\tod,k)$ is included into $\rcd$ by the recursive rules. The framework is to extract $O_1$ and $(\tod_1,k_1)$ (and $O_2$, $(\tod_2,k_2)$ for join nodes) satisfying the expected properties for the child node(s) and shows that $(\tod,k)$ will be add into $\rcd$ because of $(\tod_1,k_1)$ (and $(\tod_2,k_2)$). By induction, the extracted record will be included by the algorithm because they satisfy the expected properties, so $(\tod,k)$ will also be included. For the 'only if' part we have $(\tod,k)$ included and aim to prove the existence of a satisfying $O$. The framework is to take the record(s) based on which $(\tod,k)$ is added. By induction, the record(s) we take has corresponding orientation(s) that satisfies the expected properties. We build $O$ according to this(these) orientation(s).

\section{Proof of Theorem~\ref{correctness}} \label{mainproof1}

Before the main proof, we prove Lemma~\ref{monotone} and Lemma~\ref{error}. Remember that Lemma~\ref{monotone} states that if $(\tod,k)\in \rcd$ then $(\tod',k')\in \rcd$ for all $(\tod',k')$ with $d(v)\geq d'(v),\forall v\in X_\alpha$ and $k\leq k'\leq |Y_\alpha|$; Lemma~\ref{error} states that $[a'+b']_{\epsilon}\sim_{\epsilon_{h+1}} (a+b)$ for $a,b,a',b'\in \mathbb{R}$ satisfying $a'\sim_{\epsilon_{h}} a$, $b'\sim_{\epsilon_h} b$ for $\epsilon_h\in [0,1]$.  

\begin{proof} \textbf{(Lemma~\ref{monotone})}
    Let $O$ be the orientation for $(d,k)$. For each $v$, we arbitrarily select $d(v)-d'(v)$ out neighbors of $v$ and reverse each edge between one selected neighbor and $v$. Let the obtained orientation be $O_1$. We show that $O_1$ and $(d',k')$ satisfies the properties. (1) and (3) are trivial. To see (2), observe that $D^-_{O_1}(v)\leq D^-_{O}(v)$ for all $v\in Y_\alpha$.
\end{proof}

\begin{proof} \textbf{(Lemma~\ref{error})}
    $a'+b'\in [a/(1+\epsilon_h)+b/(1+\epsilon_h),a(1+\epsilon_h)+b(1+\epsilon_h)]$, that is, $(a'+b')\sim_{\epsilon_h} (a+b)$. As $[a'+b']_\epsilon\sim_{\epsilon} (a'+b')$, we have $\max\{[a'+b']_\epsilon/(a+b),(a+b)/[a'+b']_\epsilon\}\leq (1+\epsilon)(1+\epsilon_h)=1+\epsilon_{h+1}+\epsilon_h \epsilon -\epsilon\leq 1+\epsilon_{h+1}$. Thus
    $[a'+b']_\epsilon\sim_{\epsilon_{h+1}}(a+b)$.
\end{proof}

In the following we start the main proof.
Leaf nodes satisfy property (A) and (B) since $\rcd=\rcdap$ for a leaf node $\alpha$. Fix a node $\alpha$ of height $h$, by induction, we assume that every node descendent to $\alpha$ satisfies (A) and (B). Now we prove $\alpha$ satisfies both (A) and (B).

\subsection{Proof of (A)}

Recall that we have some $(\tod,k)\in \rcd$ now and we aim to show the existence of some $(\hat{\tod},\hat{k})\in \rcdap$ which is $h$-close to $(\tod,k)$. The case for leaf node is trivial. There are three other cases:

\begin{description}
    
\item[{Introducing $v$ Node}.]
Suppose $\alpha$ is an   introducing $v$  node and $\alpha_1$ is its child, then we have a certificate $(\tod_1,k_1)\in \rcds{1}$, where $\tod_1=\tod\setminus v$, $k_1=k$. By the induction hypothesis, there exists a record $ (\hat{\tod}_1,\hat{k}_1)\in \rcdaps{1}$ which is $(h-1)$-close to $(\tod_1,k_1)$. 
By the recursive algorithm for $\hat{R}$, there exists $ (\hat{\tod},\hat{k})\in \rcdap$, where $\hat{\tod}\setminus v=\hat{\tod}_1,\hat{\tod}(v)=0$ and $\hat{k}=\hat{k}_1$. 
Note that for all $ u\in X_{\alpha}\setminus \{v\},\hat{\tod}(u)=\hat{\tod}_1(u)\sim_{\epsilon_{h-1}}\tod_1(u)=\tod(u)$, thus we have $\hat{\tod}(u)\sim_{\epsilon_h}\tod(u)$. And $\hat{\tod}(v)=0=\tod(v)$. Since $\hat{k}=\hat{k}_1\sim_{\delta_{h-1}}k_1=k$, we get $k\sim_{\delta_h} \hat{k}$. So $(\hat{\tod},\hat{k})$ is $h$-close to $(\tod,k)$.

\item[{Join Node}.] 
If $\alpha$ is a join node with children $\alpha_1$ and $\alpha_2$, then we have a certificate $(\tod_1,k_1)\in \rcds{1}$ and $(\tod_2,k_2)\in \rcds{2}$, where for all $v\in X_{\alpha}, \tod_1(v)+\tod_2(v)=\tod(v)$ and $k_1+k_2=k$. By the induction hypothesis, there exist $(\hat{\tod}_1,\hat{k}_1)\in \rcdaps{1}$ and $(\hat{\tod}_2,\hat{k}_2)\in \rcdaps{2}$ which are $(h-1)$-close to $(\tod_1,k_1)$ and $(\tod_2,k_2)$ respectively. Note that
$(\hat{\tod}_1,\hat{k}_1),(\hat{\tod}_2,\hat{k}_2)$ is a valid certificate, so there exists $(\hat{\tod},\hat{k})\in \rcdap$, where for all $v\in X_\alpha$, $\hat{\tod}(v)=[\hat{\tod}_1(v)+\hat{\tod}_2(v)]_\epsilon$ and $\hat{k}=\hat{k}_1+\hat{k}_2$. By Lemma \ref{error}, for all $v\in X_\alpha$, $\hat{\tod}(v)\sim_{\epsilon_h} \tod(v)$ and $\hat{k}\sim_{\delta_h} k$.

\item[{Forgetting Node}.]
If $\alpha$ is a forgetting $v$ node with child $\alpha_1$, then we have a certificate $(\tod_1,k_1)\in \rcds{1}$ which satisfies one of the following conditions: 
\begin{description}
    \item[(1)] $\tod_1(v)=|N(v)\cap Y_{\alpha}|, \tod_1\setminus v = \tod$ and $k_1=k$.
    \item[(2)] There exist $ \Delta(v)\subseteq N(v)\cap X_\alpha$ and $A\in [|N(v)\cap Y_\alpha|-c(v)+|\Delta(v)|,|N(v)\cap Y_\alpha|]$ such that for all $u\in \Delta(v),\tod_1(u)=\tod(u)-1$ and for all $ u\in X_{\alpha_1}\setminus (\Delta(v)\cup \{v\}), \tod_1(u)=\tod(u), \tod_1(v)=A$ and $k_1= k-1$.
\end{description}
 Notice that these two conditions just correspond to the recursive rules with the same index.
By the induction hypothesis, there exists an approximate counterpart of the certificate. To be specific, there exists $(\hat{\tod}_1,\hat{k}_1)\in \rcdaps{1}$ which is $(h-1)$-close to $(\tod_1,k_1)$. Consider two sub-cases:
\begin{description}
    \item \textbf{Type (1) certificate.} As $(\hat{\tod}_1,\hat{k}_1)$ is $(h-1)$-close to $(\tod_1,k_1)$ and $d_1(v)=|N(v)\cap Y_\alpha|$, we have $\hat{\tod}_1(v)\sim_{\epsilon_{h-1}} |N(v)\cap Y_\alpha|$, which means (1a) is satisfied. Let $(\tod_t,|Y_{\alpha_1}|)$ be the tested pair in (1b). By the definition of $(\tod_t,|Y_{\alpha_1}|)$, for all $u\in X_{\alpha_1}\setminus \{v\}, \tod_t(u)=\lceil \hat{\tod}_1(u)/(1+\epsilon_{h-1}) \rceil \leq \lceil(1+\epsilon_{h-1})\tod_1(u)/(1+\epsilon_{h-1})\rceil = \tod_1(u)$, and $\tod_t(v)=\tod_1(v)=|N(v)\cap Y_\alpha|$.
    Also observe that $k_1\leq |Y_{\alpha_1}|$. Thus by Lemma \ref{monotone}, $(\tod_t,|Y_{\alpha_1}|)\in \rcds{1}$, which means (1b) is satisfied. As (1a), (1b) are satisfied, there exists $ (\hat{\tod},\hat{k})\in \rcdap$, where $\hat{\tod}=\hat{\tod}_1\setminus v,\hat{k}=\hat{k}_1$. Finally, observe that for all $ u\in X_\alpha,\tod(u)=\tod_1(u)\sim_{\epsilon_{h-1}}\hat{\tod}_1(u)=\hat{\tod}(u)$. $k=k_1\sim_{\delta_{h-1}}\hat{k}_1=\hat{k}$. So $(\tod,k)$ and $(\hat{\tod},\hat{k})$ are $h$-close.
    
    \item \textbf{Type (2) certificate.} As $(\hat{\tod}_1,\hat{k}_1)$ is $(h-1)$-close to $(\tod_1,k_1)$ and $d_1(v)=A$, we have $\hat{\tod}_1(v) \geq A/(1+\epsilon_{h-1})$, which means (2a) is satisfied. Let $(\tod_t,|Y_{\alpha_1}|)$ be the tested pair in (2b), i.e. for all $u\in X_{\alpha_1}\setminus \{v\},\tod_t(u)=\lceil \hat{\tod}_1(u)/(1+\epsilon_{h-1}) \rceil$ and $\tod_t(v) = A$. Similarly we have that $\tod_1(u)\geq \tod_t(u)$ for all $u\in X_{\alpha}$ while $k_1\leq |Y_{\alpha_1}|$. Thus by Lemma \ref{monotone}, $(\tod_t,|Y_{\alpha_1}|)\in \rcds{1}$, which means (2b) is satisfied. As (2a), (2b) are satisfied, there exists $(\hat{\tod},\hat{k})\in \rcdap$, where $\hat{\tod}(u)=[\hat{\tod}_1(u)+1]_\epsilon$ for all $u\in X_\alpha\setminus\Delta(v)$,
    $\hat{\tod}(u)=\hat{\tod}_1(u)$ for all $u\in \Delta(v)$, and $\hat{k}=\hat{k}_1+1$. 
    For each $u\in \Delta(v)$, $\tod(u)=\tod_1(u)\sim_{\epsilon_{h-1}}\hat{\tod}_1(u)=\hat{\tod}(u)$; for all $u\in X_\alpha\setminus \Delta(v)$, $\tod(u)\sim_{\epsilon_h}\hat{\tod}(u)$ by Lemma \ref{error}; $k-1=k_1\sim_{\delta_{h-1}}\hat{k}_1=\hat{k}-1$ and thus $k\sim_{\delta_h} \hat{k}$. So $(\tod,k)$ and $(\hat{\tod},\hat{k})$ are $h$-close.
\end{description}
\end{description}

\subsection{Proof of (B)}

Now we have some $(\hat{\tod},\hat{k})\in \rcdap$ and we aim to show the existence of some $(\tod,k)\in \rcd$ which is $h$-close to $(\hat{\tod},\hat{k})$.

\begin{description}
\item[{Introducing $v$ Node}.]
Suppose $\alpha$ is an introducing $v$ node with $\alpha_1$ as its child, then by the the recursive rules we have a certificate $(\hat{\tod}_1,\hat{k}_1)\in \rcdaps{1}$, where $\hat{\tod}_1=\hat{\tod}\setminus v$, $\hat{k}_1=\hat{k}$. By induction hypothesis, there exists $ (\tod_1,k_1)\in \rcds{1}$ which is $(h-1)$-close to $(\hat{\tod}_1,\hat{k}_1)$. 
$(\tod_1,k_1)$ is a valid certificate, so there exists $(\tod,k)\in \rcd$, where $\tod\setminus v=\tod_1,\tod(v)=0$ and $k=k_1$. For all $u\in X_{\alpha}\setminus \{v\},\tod(u)=\tod_1(u)\sim_{\epsilon_{h-1}}\hat{\tod}_1(u)=\hat{\tod}(u)$ so $\hat{\tod}(u)\sim_{\epsilon_h}\tod(u)$; $\hat{\tod}(v)=0=\tod(v)$; $k=k_1\sim_{\delta_{h-1}}\hat{k}_1=\hat{k}$, so $k\sim_{\delta_h} \hat{k}$.

\item[{Join Node}.]
If $\alpha$ is a join node with $\alpha_1$ and $\alpha_2$ as its children, then we have a certificate $(\hat{\tod}_1,\hat{k}_1)\in \rcdap,(\hat{\tod}_2,\hat{k}_2)\in \rcdaps{2}$, where for all $v\in X_{\alpha}, [\hat{\tod}_1(v)+\hat{\tod}_2(v)]_\epsilon=\hat{\tod}(v)$ and $\hat{k}_1+\hat{k}_2=\hat{k}$.  By induction hypothesis,  there exist $ (\tod_1,k_1)\in \rcds{1},(\tod_2,k_2)\in \rcds{2}$ which are $(h-1)$-close to $(\hat{\tod}_1,\hat{k}_1)$ and $(\hat{\tod}_2,\hat{k}_2)$ respectively.
Since $(\tod_1,k_1),(\tod_2,k_2)$ is a valid certificate, we have there exists $(\tod,k)\in \rcd$, where for all $v\in X_\alpha, \tod(v)=\tod_1(v)+\tod_2(v)$ and $k=k_1+k_2$. By Lemma \ref{error}, for all $v\in X_\alpha, \tod(v)\sim_{\epsilon_h} \hat{\tod}(v)$. And $k\sim_{\delta_h} \hat{k}$.

\item[{Forgetting $v$ Node}.] If $\alpha$ is a forgetting $v$ node, then we have a certificate $(\hat{\tod}_1,\hat{k}_1)\in \rcdaps{1}$ and a tested pair $(\tod_t,|Y_{\alpha_1}|)\in \rcds{1}$ in (1b) or (2b) with one of the following types:
    \begin{description}
        \item[(1)] $\hat{\tod}_1(v)\sim_{\epsilon_{h-1}} |N(v)\cap Y_{\alpha}|; \hat{\tod}_1\setminus v = \hat{\tod}$; $\hat{k}_1=\hat{k}$; $\tod_t(v)=|N(v)\cap Y_{\alpha}|$;
        \item[(2)] there exists $\Delta(v)\subseteq N(v)\cap X_\alpha$ and $A\in [|N(v)\cap Y_\alpha|-c(v)+|\Delta(v)|,|N(v)\cap Y_\alpha|]$ such that for all $u\in \Delta(v),\hat{\tod}(u)=[\hat{\tod}_1(u)+1]_\epsilon$; for all $u\in X_{\alpha_1}\setminus \Delta(v)\cup \{v\}, \hat{\tod}_1(u)=\hat{\tod}(u); \hat{\tod}_1(v)\geq A/(1+\epsilon_{h-1})$; $\hat{k}_1= \hat{k}-1$; $\tod_t(v)=A$.
    \end{description}
    In both types, for all $u\in X_{\alpha_1}\setminus\{v\},\tod_t(u) = \lceil\hat{\tod}_1(u)/(1+\epsilon_{h-1})\rceil$. Notice that these two types just correspond to the recursive rules with the same index. 
    By induction hypothesis, there exists $ (\tod_1,k_1)\in \rcds{1}$ which is $(h-1)$-close to $(\hat{\tod}_1,\hat{k}_1)$.
By the definition of $(h-1)$-closeness we have that for every $ u\in X_{\alpha_1}\setminus \{v\}, \tod_1(u)\geq \lceil \hat{\tod}_1(u)/(1+\epsilon_{h-1}) \rceil =\tod_t(u)$. Consider the two cases:
\begin{description}
    \item[{Type (1) certificate and tested pair.}] In this case $\tod_t(v)=|N(v)\cap Y_\alpha|$ and $\hat{\tod}_1(v)\sim_{\epsilon_{h-1}} |N(v)\cap Y_\alpha|$. Notice that for all $u\in X_{\alpha_1}\setminus \{v\},\tod_t(u)=\lceil \hat{\tod}_1(u)/(1+\epsilon_{h-1}) \rceil\leq \tod_1(u)$. Consider the pair $(\tod_t,k_1^*)$ where $k_1^*=k_1+|N(v)\cap Y_\alpha|-\tod_1(v)$. As $(\tod_1,k_1),(\tod_t,|Y_{\alpha_1}|)\in \rcds{1}$, by Lemma \ref{monotone} and \ref{extent}, we have $(\tod_t,k_1^*)\in \rcds{1}$. This is a valid certificate as $\tod_t(v)=|N(v)\cap Y_\alpha|$. So there exists $(\tod,k)\in \rcd$, where $\tod=\tod_t\setminus v$ and $k=k_1^*$. 
    
    Then we show that $(\tod,k)$ is $h$-close to $(\hat{\tod},\hat{k})$. Notice that $\hat{k}=\hat{k}_1\sim_{\delta_{h-1}} k_1,k=k_1^*=k_1+|N(v)\cap Y_\alpha|-\tod_1(v)$. As $\tod_1(v)\sim_{\epsilon_{h-1}}\hat{\tod}_1(v)$, thus $\tod_1(v)\geq |N(v)\cap Y_\alpha|/(1+\epsilon_{h-1})^2$, thus we have that $|N(v)\cap Y_\alpha|-\tod_1(v)\leq ((1+\epsilon_{h-1})^2-1)\tod_1(v)\leq 3\epsilon_{h-1}k_1$. Notice that $\tod_1(v)\leq k_1$ by Lemma \ref{lb}. So $k\sim_{3\epsilon_{h-1}} k_1\sim_{\delta_{h-1}} \hat{k}_1=\hat{k}$. 
    As $(1+3\epsilon_{h-1})(1+\delta_{h-1})=1+(4h+6)(h-1)\epsilon+24h(h-1)^2\epsilon^2\leq 1+4h(h+1)\epsilon$, we have $\hat{k}\sim_{\delta_h} k$. 

    For all $u\in X_{\alpha}$, we just have $\tod(u)=\tod_t(u)\sim_{\epsilon_{h-1}} \hat{\tod}_1(u)=\hat{\tod}(u)$.
        
    \item[{Type (2) certificate and tested pair.}]In this case, there exists $\Delta(v)\subseteq N(v)\cap X_\alpha$ and $A\in [|N(v)\cap Y_\alpha|-c(v)+|\Delta(v)|,|N(v)\cap Y_\alpha|]$ such that $\tod_t(v)= A$. Still we have that for all $u\in X_{\alpha_1}\setminus \{v\}, \tod_t(u)\leq \tod_1(u)$. Let $k_1^*:= k_1+\max\{0,A-\tod_1(v)\}$. 
    As $(\tod_1,k_1),(\tod_t,|Y_{\alpha_1}|)\in \rcds{1}$, by Lemma \ref{monotone} and \ref{extent}, we have $(\tod_t,k_1^*)\in \rcds{1}$.
    This is a valid certificate as $\tod_t(v)= A$. So there exists $(\tod,k)\in \rcd$, where for all $u\in X_\alpha\setminus \Delta(v),\tod(u)=\tod_t(u)$, for all $u\in \Delta(v),\tod(u)=\tod_t(u)+1$ and $k=k_1^*+1$.
    
    We use the same idea to show $\hat{k}\sim_{\delta_{h}} k$.
    Still, we have $k_1\geq \tod_1(v)\geq A/(1+\epsilon_{h-1})^2$. So $k_1^*=k_1+\max\{0,A-\tod_1(v)\}\leq 3\epsilon_{h-1}k_1$ and obviously, $k_1^*\geq k_1$. So $k_1^*\sim_{3\epsilon_{h-1}}k_1$.
    As $\hat{k}-1=\hat{k}_1\sim_{\delta_{h-1}} k_1$, we have $\hat{k}-1\sim_{\delta_h} k_1^*=k-1$. Thus $\hat{k}\sim_{\delta_h} k$.

    For all $u\in X_{\alpha}\setminus \Delta(v)$, we have $\tod(u)={\tod_1}^*(u)\sim_{\epsilon_{h-1}} \hat{\tod}_1(u)=\hat{\tod}$. For all $u\in \Delta(v)$, we have $\tod(u)-1={\tod_1}^*(u)\sim_{\epsilon_{h-1}} \hat{\tod}_1(u)$ and $\hat{\tod}(u)=[\hat{\tod}_1(u)+1]_\epsilon$, by Lemma \ref{error} we have $\tod(u)\sim_{\epsilon_h} \hat{\tod}(u)$.
\end{description}
\end{description}

\noindent\textbf{Remark} The above proof actually provides the intuition of how to modify our algorithm so that it outputs a solution of size at most $(1+\delta_{h_0})^2 OPT$. The idea is to, for all $\alpha\in V(T)$ and all $(\hat{\tod},\hat{k})\in \rcdap$, keep track of an exact $h$-close record $(\tod,k)$ of $(\hat{\tod},\hat{k})$ and its corresponding orientation i.e. an orientation $O$ with which $(\tod,k)$ satisfies the expected properties. Still, this is done by a bottom-up dynamic programming. Fix a non-leaf node $\alpha$, suppose that for all its children, this has been done. Now suppose we want to find that orientation for a record $(\hat{\tod},\hat{k})\in \rcdap$. According to the recursive rules, there exists $(\hat{\tod}_1,\hat{k}_1)\in \rcdaps{1}$ (and $(\hat{\tod}_2,\hat{k}_2)\in \rcdaps{2}$ for join nodes) from which we construct $(\hat{\tod}_1,\hat{k}_1)$. Proof of (B) in fact shows that if the exact $h-1$-close exact counterpart and the corresponding orientation has been stored, then we can construct the $h$-close record $(\tod,k)\in \rcd$ and its corresponding orientation. Notice that if $\alpha$ is the forgetting node we may need Lemma~\ref{extent} to prove the existence of such $(\tod,k)$. But fortunately, Lemma~\ref{extent} is also constructive.

\label{proofapprx}

\section{Proof of Theorem~\ref{main2}}\label{proofTSSVDS}

We first prove that for any bag $X_\alpha$ in a tree decomposition for a graph $G=(V,E)$, vertex sets $Y_\alpha$ and $ V\setminus V_\alpha$ are disconnected in $G[V\setminus X_\alpha]$ i.e. $X_\alpha$ separates $V\setminus X_\alpha$ into two disconnected parts $Y_\alpha$ and $V\setminus V_\alpha$.
Assume they are connected, then there exists $u\in Y_\alpha$ and $v\in V\setminus V_\alpha$ such that $(u,v)\in E$. So there exists some bag containing both $u$ and $v$. This implies that the nodes whose assigned bags containing $u$ or $v$ forms a subtree in the tree decomposition. However, $X$ divides apart some nodes whose assigned bags containing $u$ or $v$, a contradiction.

Since $(T,\mathcal{X})$ is a tree decomposition for $G_I[V_I\setminus D]$, a corollary is that for any node $\alpha\in V(T)$, $X_\alpha\cup D$ separates $V_I\setminus (D\cup X_\alpha)$ into disconnected parts $Y_\alpha$ and $V_I\setminus(V_\alpha\cup D)$.

Now we analyze Algorithm~\ref{subprocess}. We use induction. Firstly let's consider basic cases. If $(I,D)$ has a minimum solution of size at most $l$, then the algorithm returns at line $8$ an optimal solution. If $(I,D)$ contains no solution, which is equivalent to $V_I$ is not a solution due to monotonicity, then any leaf node is not $l$-good since $Y_{\alpha'}=\emptyset$ for a leaf node $\alpha'$ and the algorithm returns at line $12$. So in these cases, the algorithm is correct. In the remaining case, the algorithm picks a node $\alpha$ which is not $l$-good at line 10, then it adds some vertices to the final output and creates a new instance to make a recursive call. Since $\alpha$ is the node which is not $l$-good node with minimum height, its children are all $l$-good. Let the optimal solution for $(I,D)$ be $S^*$. Let $S:=Solve((I,D\cup F),(T',\mathcal{X}),l)$ and let $S'$ denote the optimal solution for $(I,D\cup F)$.

\begin{lemma}\label{splitimply}
As the problem is monotone and splittable, we have the following:
\begin{enumerate}[(i)]
    \item $S^*\cap Y_{\alpha}$ is a solution for $(I,V_I\setminus Y_{\alpha})$. 
    \item For all $\alpha_c$ a child of $\alpha$, $S^*\cap Y_{\alpha_c}$ is a solution for $(I,V_I\setminus Y_{\alpha_c})$;
    \item $S^*\setminus F$ is a solution for $(I,D\cup F)$;
    \item $E'\cup S$ is a solution for $(I,D)$.
\end{enumerate}
\end{lemma}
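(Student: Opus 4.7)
The plan is to dispatch (i), (ii), (iii) by straightforward applications of monotonicity, and to reserve the splittable property for the main payload (iv). A useful preliminary observation is that since $(T,\mathcal{X})$ is a tree decomposition of $G_I[V_I\setminus D]$, every bag avoids $D$, so $D \cap V_\alpha = \emptyset$; in particular $D \subseteq V_I \setminus Y_\alpha$ and $D \subseteq V_I \setminus Y_{\alpha_c}$ for every child $\alpha_c$ of $\alpha$.

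For (i), I will start from $S^* \cup D \in \mathcal{F}_\Phi(I)$ and enlarge: the set $(S^* \cap Y_\alpha) \cup (V_I \setminus Y_\alpha)$ already contains $S^* \cup D$, so monotonicity places it in $\mathcal{F}_\Phi(I)$, which is exactly the statement that $S^* \cap Y_\alpha$ is a solution for $(I, V_I \setminus Y_\alpha)$. Part (ii) is identical with $Y_\alpha$ replaced by $Y_{\alpha_c}$ (also using $Y_{\alpha_c}\cap D=\emptyset$). Part (iii) follows because $(S^* \setminus F) \cup (D \cup F) = S^* \cup D \cup F \supseteq S^* \cup D$, so by monotonicity it lies in $\mathcal{F}_\Phi(I)$, i.e. $S^*\setminus F$ is a solution for $(I,D\cup F)$.

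For (iv) I will apply the splittable property with separator $X := D \cup \bigcup_{\alpha_c \in N^-_\alpha} X_{\alpha_c}$ in $G_I$, using the disconnected parts $\{Y_{\alpha_c}\}_{\alpha_c \in N^-_\alpha}$ together with $W := V_I \setminus (D \cup F)$. Each $E_{\alpha_c} \subseteq Y_{\alpha_c}$ is, by its definition in the algorithm, a solution for $(I, V_I \setminus Y_{\alpha_c})$, and by the outer inductive hypothesis on the recursive call, $S \subseteq W$ is a solution for $(I, D \cup F) = (I, V_I \setminus W)$. Splittable then yields that $X \cup \bigcup_{\alpha_c} E_{\alpha_c} \cup S$, which equals $D \cup E' \cup S$ since $E' = \bigcup_{\alpha_c}(E_{\alpha_c} \cup X_{\alpha_c})$, lies in $\mathcal{F}_\Phi(I)$; equivalently, $E' \cup S$ is a solution for $(I, D)$.

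The main obstacle is the bookkeeping in (iv): verifying that $X$ genuinely separates $V_I \setminus X$ into the parts claimed above inside $G_I$. For each child $\alpha_c$, I plan to invoke the tree-decomposition separator property proved at the start of the appendix, applied to the tree decomposition of $G_I[V_I\setminus D]$: it gives that $X_{\alpha_c}$ disconnects $Y_{\alpha_c}$ from $(V_I \setminus D) \setminus V_{\alpha_c}$ in $G_I[(V_I\setminus D)\setminus X_{\alpha_c}]$. Translating back to $G_I$ shows that $D \cup X_{\alpha_c}$ disconnects $Y_{\alpha_c}$ from $V_I \setminus (D \cup V_{\alpha_c})$, a set that contains $W$. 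For the join-node case, where $\alpha$ has two children $\alpha_1,\alpha_2$, I will additionally use $X_{\alpha_1}=X_{\alpha_2}=X_\alpha$ together with the nice-decomposition property $Y_{\alpha_1}\cap Y_{\alpha_2}=\emptyset$ to deduce $Y_{\alpha_1}\subseteq V_I\setminus V_{\alpha_2}$, so $Y_{\alpha_1}$ and $Y_{\alpha_2}$ are themselves separated by $D\cup X_{\alpha_2}\subseteq X$. Together these show pairwise disconnectedness of the listed parts, which is exactly what splittable requires.
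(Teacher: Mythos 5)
Your proposal is correct and follows essentially the same route as the paper's proof: (i)--(iii) by enlarging $S^*\cup D$ via monotonicity, and (iv) by applying splittability to the separator $D\cup\bigcup_{\alpha_c}X_{\alpha_c}$ with parts $\{Y_{\alpha_c}\}$ and $V_I\setminus(D\cup F)$. The only difference is that you spell out the separation bookkeeping (including the join-node case) more explicitly than the paper, which simply asserts it.
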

\begin{proof}
    \begin{enumerate}[(i)]
    \item By the definition of partial instances, $S^*\cup D$ is a solution for $I$. By monotonicity, $S^*\cup D\cup (V_I\setminus Y_{\alpha})=S^*\cap Y_{\alpha}\cup (V_I\setminus Y_{\alpha})$ is also a solution for $I$. So $S^*\cap Y_{\alpha}$ is a solution for $(I,V_I\setminus Y_{\alpha})$ according to the definition of partial solution.
    \item Similarly as above, by monotonicity, $S^*\cup D\cup (V_I\setminus Y_{\alpha_c})=S^*\cap Y_{\alpha_c}\cup (V_I\setminus Y_{\alpha_c})$ is also a solution for $I$. So $S^*\cap Y_{\alpha_c}$ is a solution for $(I,V_I\setminus Y_{\alpha_c})$.
    \item By monotonicity, $S^*\cup D\cup F$ is also a solution for $I$. So $S^*\setminus F$ is a solution for $(I,D\cup F)$.
    \item We need to use the property that $\Phi$ is splittable. By the algorithm, $E'=\bigcup_{\alpha_c\in N^-_{\alpha}} X_{\alpha_c} \cup \bigcup_{\alpha_c\in N^-_{\alpha}} E_{\alpha_c}$ and $F=\bigcup_{\alpha_c\in N^-_{\alpha}} V_{\alpha_c}$. Let $X'$ denote $\cup_{\alpha_c\in N^-_{\alpha}} X_{\alpha_c}$. To use the property that $\Phi$ is splittable, observe that $D\cup X'$ is a separator. Each $Y_{\alpha_c}$ is an isolated part (not connected to the remaining graph) in $G_I[V_I\setminus (D\cup X')]$. The remaining part in $G_I[V_I\setminus (D\cup X')]$ is thus isolated and it is $V_I\setminus (D\cup X'\cup \bigcup_{\alpha_c\in N^-_{\alpha}} Y_{\alpha_c})=V_I\setminus (D\cup F)$. Because each $E_{\alpha_c}$ is a solution for $(I,V_I\setminus Y_{\alpha_c})$, and by induction hypothesis, $S$ is a solution for $(I,D\cup F)$, we get that $\Phi$ is splittable implies $X'\cup D\cup S\cup \bigcup_{\alpha_c\in N^-_{\alpha}} E_{\alpha_c}=E'\cup D\cup S$ is a solution for $I$. So $E'\cup S$ is a solution for $(I,D)$.
\end{enumerate} \end{proof}

By induction we assume that $|S|\leq (1+(w+1)/(l+1))|S'|$. 
The approximation ratio is
\begin{align*}
    \frac{|S\cup E'|}{|S^*|}\leq \frac{|S|+\sum_{\alpha_c\in N^-_\alpha}|E_{\alpha_c}|+|\bigcup_{\alpha_c\in N^-_\alpha}{X_{\alpha_c}}|}{|S^*\cap F|+|S^*\setminus F|}.
\end{align*}
Since $|S|/|S^*\setminus F|\leq |S|/|S'|\leq 1+(w+1)/(l+1)$, we only need to show $(\sum_{\alpha_c\in N^-_\alpha}|E_{\alpha_c}|+|\bigcup_{\alpha_c\in N^-_\alpha}{X_{\alpha_c}}|)/|S^*\cap F|\leq 1+(w+1)/(l+1)$. Notice that by the definition, $Y_\alpha\subseteq F$. Since $\alpha$ is not $l$-good, (i) implies that $|S^*\cap F|\geq |S^*\cap Y_\alpha|\geq l+1$.  By (ii), for all $\alpha_c\in N^-_\alpha$, $|E_{\alpha_c}|\leq |S^*\cap Y_{\alpha_c}|$. We have
\begin{align*}
    &\frac{\sum_{\alpha_c\in N^-_\alpha}|E_{\alpha_c}|+|\bigcup_{\alpha_c\in N^-_\alpha}{X_{\alpha_c}}|}{|S^*\cap F|} \\
    = &\frac{\sum_{\alpha_c\in N^-_\alpha}|E_{\alpha_c}|}{|S^*\cap F|}+ \frac{|\bigcup_{\alpha_c\in N^-_\alpha}X_{\alpha_c}|}{|S^*\cap F|}\\
    \leq &\frac{\sum_{\alpha_c\in N^-_\alpha}|E_{\alpha_c}|}{\sum_{\alpha_c\in N^-_\alpha}|S^*\cap Y_{\alpha_c}|}+ \frac{|\bigcup_{\alpha_c\in N^-_\alpha}X_{\alpha_c}|}{|S^*\cap F|}\text{ ($Y_{\alpha_c}$'s are disjoint subsets of $F$)}\\
    \leq &1+ \frac{|\bigcup_{\alpha_c\in N^-_\alpha}X_{\alpha_c}|}{|S^*\cap F|}\text{ (By (ii) and the definition of $E_{\alpha_c}$)}\\
     \leq &1+ \frac{|\bigcup_{\alpha_c\in N^-_\alpha}X_{\alpha_c}|}{l+1}\text{ (By $|S^*\cap F|\ge l+1$)}.
\end{align*}
In a nice tree decomposition, the only case that $|N^-_\alpha|>1$ is that $\alpha$ is a join node, however in this case, the bags of its two children are the same. So ${|\bigcup_{\alpha_c\in N^-_\alpha}X_{\alpha_c}|}/(l+1)+1\leq (w+1)/(l+1)+1$.
The approximation ratio follows.
Each time we make a recursive call, the optimal solution size for the current instance decreases by at least $1$. It follows that the algorithm makes at most $O(n)$ recursive calls, so the running time is $f(l,w,n)n^{O(1)}$. And thus Theorem~\ref{main2} is proved.

\begin{figure}
\begin{subfigure}[t]{0.3\textwidth}
    \centering
    \includegraphics[width=0.5\textwidth]{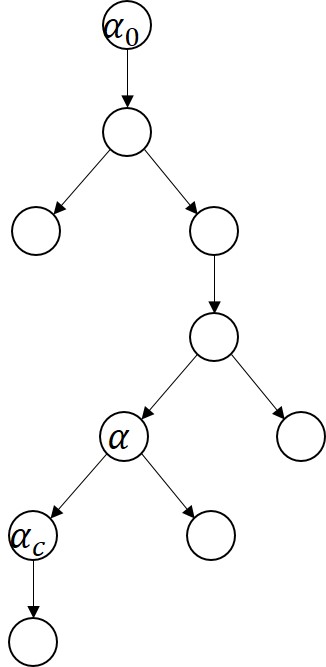}
    \caption{$T$ in a tree decomposition $(T,\mathcal{X})$.}
\end{subfigure}\hfill
\begin{subfigure}[t]{0.3\textwidth}
    \centering
    \includegraphics[width=0.7\textwidth]{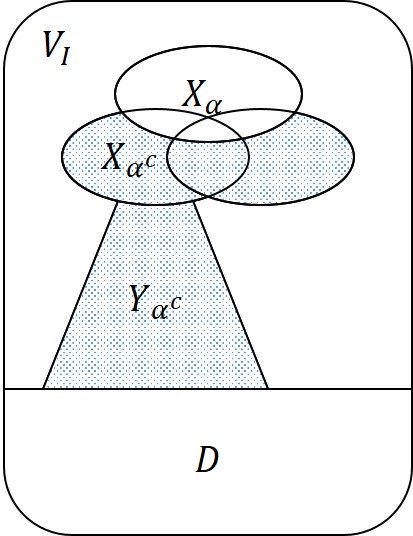}
    \caption{The vertex sets about $\alpha$ and $\alpha_c$. Dotted part is $Y_{\alpha}$}
\end{subfigure}
\begin{subfigure}[t]{0.3\textwidth}
    \centering
    \includegraphics[width=0.7\textwidth]{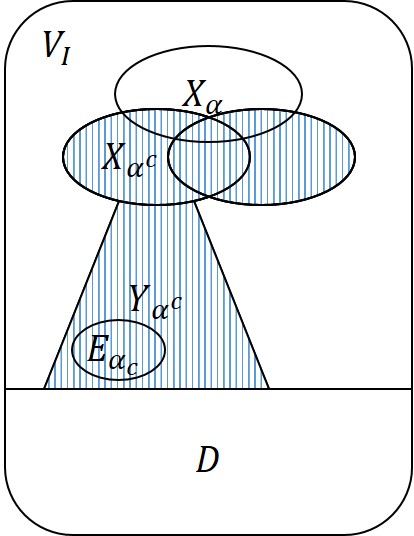}
    \caption{$E_{\alpha_c}$ is added, and $F$ is the lined part.}
\end{subfigure}
\caption{Venn diagram of sets defined in Algorithm~\ref{subprocess}}\label{fig:demAlg}
\end{figure}

\end{document}